\newtheorem{Prop}{Proposition}[section]
\newtheorem{Test}[Prop]{Test}
\newtheorem{Lem}[Prop]{Lemma}
\newtheorem{Th}[Prop]{Theorem}
\newtheorem{Rm}[Prop]{Remark}
\newtheorem{Def}[Prop]{Definition}
\newtheorem{Cor}[Prop]{Corollary}
\newtheorem{Proc}[Prop]{Procedure}
\newfont{\smcal}{cmu10 scaled 1200}
\newfont{\handw}{cmmi10 scaled 1200}
\newfont{\handws}{cmmi10 scaled 800}
\newcommand{\rank}{\mbox{\rm rank}}
\newcommand{\RR}{{\mathbb R}}
\newcommand{\diag}{\mbox{\rm diag}}
\newcommand{\tr}{\mbox{\rm trace}}
\newcommand{\sinc}{\mbox{\rm sinc}}
\newcommand{\argmin}{\mathop{\mbox{\rm argmin}}}
\newcommand{\argmax}{\mathop{\mbox{\rm argmax}}}
\newcommand{\arginf}{\mathop{\mbox{\rm arginf}}}
\newcommand{\pr}{\mathfrak{pr}}
\newcommand{\iid}{{\stackrel{i.i.d.}{\sim}}}
\newcommand{\X}{{\cal X}}%
\newcommand{\R}{{\mathbb R}}%
\newcommand{\SSS}{{\mathbb S}}%
\newcommand{\E}{{\mathbb E}}%
\newcommand{\N}{\mathbb{N}}
\newcommand{\wL}{\widetilde{L}}
\newcommand{\Exp}{\mbox{\rm Exp}}
\newcommand{\Log}{\mbox{\rm Log}}
\newcommand{\Inull}{\mathcal{I}_0\big(G\big)}
\newcommand{\Diff}{\rm{Diff}^+[0,1]}
\newcommand{\so}{\mathfrak{so}}
\newcommand{\fg}{\mathfrak{g}}
\newcommand{\G}{G}
\newcommand{\Shape}{\mathcal{S}}
\newcommand{\cC}{\mathcal{C}}
\newcommand{\cG}{\mathcal{G}}
\newcommand{\cK}{\mathcal{K}}
\newcommand{\Prb}{\mathbb{P}}
\newcommand{\Norm}{\mathcal{N}}
\title{Functional Inference on Rotational Curves and Identification of Human Gait at the Knee Joint}
\author{Fabian J.E. Telschow\footnote{Felix Bernstein Institute for Mathematical Statistics in the Biosciences, Georgia Augusta University of G\"ottingen, Goldschmidtstrasse 7, 37077 G\"ottingen, Germany}, 
 Stephan F. Huckemann$^*$ and Michael R. Pierrynowski\footnote{School
 of Rehabilitation Science, McMaster University, Canada
}}
\begin{document}
\date{}
\maketitle

\begin{abstract}

We extend Gaussian perturbation models in classical functional data analysis to the three-dimensional rotational group where a zero-mean Gaussian process in the Lie algebra under the Lie exponential spreads multiplicatively around a central curve. As an estimator, we introduce point-wise extrinsic mean curves which feature strong perturbation consistency, and which are asymptotically a.s. unique and differentiable, if the model is so. Further, we consider the group action of time warping and that of spatial isometries that are connected to the identity. The latter can be asymptotically consistently estimated if lifted to the unit quaternions. Introducing a generic loss for Lie groups, the former can be estimated, and based on curve length, due to asymptotic differentiability, we propose two-sample permutation tests involving various combinations of the group actions. This methodology allows inference on gait patterns due to the rotational motion of the lower leg with respect to the upper leg. This was previously not possible because, among others, the usual analysis of separate Euler angles is not independent of marker placement, even if performed by trained specialists.
\end{abstract}


\section{Introduction}

To date a rich set of powerful descriptive and inferential tools are available for the statistical goals of \emph{functional data analysis} (FDA) as coined by \cite{Ramsay1982}.  
For a recent overview, see \cite{Wang2015}. 
Beyond classical Euclidean FDA, currently, functional data taking values in manifolds is gaining increased interest. For instance the work of \cite{SrivastavaKlassen2011} has spurred several subsequent publications, e.g. \cite{Celledoni2015,Bauer2015,Bauer2016, Amor2016}. At this point, however, inferential tools, for manifold valued curves, say, are not yet available, although, as an important step, a general time warping method has been developed by \cite{SuKurtek2014}.

In this paper we propose a nonparametric two-sample test for functional data that takes values in the rotational group $SO(3)$. Like a classical two sample test it relies on the concept of a mean, in our case, a mean curve. With this endeavor, there are three challenges. 

First, unlike Euclidean spaces, in any of the non-Euclidean geometries of $SO(3)$ there is no a priori guarantee that the mean is unique and that its estimators feature favorable statistical properties. Second, it is not clear which from  the ``zoo'' of generalizations (e.g. \cite{Huckemann12}) of the classical Euclidean mean to a curved space should be taken. Third, temporal and spatial registration require specific non-Euclidean loss functions, that should at least be  invariant under reparametrization (temporal) and actions of isometries (spatial).

To these ends, naturally generalizing the common functional model in $\R^D$
\begin{equation}\label{eq:RDfunctionalModel}
 y(t) = \mu(t) + Z_t\,,
\end{equation}
where $\mu$ is deterministic and $Z_t$ is a zero-mean multivariate Gaussian process,  
we develop the framework of \emph{Gaussian perturbation} (GP) models 
\begin{equation}\label{eq:GPModel}
 y(t) = \mu(t)\Exp\big(Z_t\big)\,,
\end{equation}
in which a zero-mean Gaussian process $Z_t$ in the Lie algebra under the Lie exponential  $\Exp$ spreads multiplicatively around a central curve $\mu$. This is indeed a generalization because $\R^D$ can be viewed as an additive Lie group agreeing with its Lie algebra, such that the Lie exponential is the identity and the ``multiplication'' in (\ref{eq:GPModel}) is then just the summation in (\ref{eq:RDfunctionalModel}). In fact, we will see that our GP models are canonical in that way that the class of models due to multiplication from the right as in (\ref{eq:GPModel}) is equal to class of models due to multiplication from the left and asymptotically, as the noise level vanishes, equal to models due to multiplication from both sides. In non-functional settings analog models were studied and used in \citet{Downs1972}, \citet{Rancourt2000} and \citet{Fletcher2011}.

Defining the \emph{pointwise extrinsic Fr\'echet mean} (PEM) curve we show strong asymptotic consistency, namely that PEM curves converge a.s. to the central curve. This is rather unusual as \emph{perturbation consistency} is often not given, even if the noise is isotropic, e.g. \cite{Kent1997,Huckemann2011b}. Here it holds under very general conditions, in particular for our GP models which do not require isotropicity. Moreover, in view of our tests based on curve lengths, thus requiring existence of derivatives or at least absolute continuity, we show that for a.s.  $\cC^1$ (continuously differentiable) GP models, asymptotically sample PEM curves are again $\cC^1$.  

Further we model spatial registration by the action of the identity component of the isometry group on $SO(3)$. In a Euclidean geometry, these correspond to orientation preserving Euclidean motions. While for this action, a new loss induced from lifting to the unit quaternions, allows for asymptotically strongly consistent registration, 
for temporal alignment we define another new loss function which is canonical on any Lie group. Thus we can avoid the generic method for temporal alignment on manifolds proposed by \cite{SuKurtek2014} which suffers from non-canonical choices such as parallel transport to a reference tangent space. 
These facts combined with the equivariance of GP models under both group actions are used to formulate nonparametric two-sample permutation tests with and without spatial and/or temporal registration based on lengths of sample PEM curves.

Notably, some of our ideas generalize to $SO(n)$ (for instance, for odd dimension $n$, the group is still simple and hence the identity component of the group of spatial isometries is again given by $SO(n)\times SO(n)$, cf. Lemma \ref{lem:isometry-group}) and GP models (\ref{eq:GPModel}) in conjunction with the loss functions and their temporal invariance and inverse alignment properties can, of course, be defined on any Lie group. For a concise presentation, we have restricted ourselves to $SO(3)$
which allows fast spatial registration using the double cover of rotations by unit quaternions.

This research has been motivated by the long standing problem of reproducibility of gait curves in the biomechanical analysis of motion (see \cite{duhamel2004statistical}). Here, studying trajectories of the relative rotations between tibia (the lower leg's main bone) with respect to the femur (the upper leg's bone) originating from typical walking motion is of interest, for instance in order to assess various degenerative processes (e.g., early onset of osteoarthritis, see \cite{pierrynowski2010patients}), as well as to evaluate therapeutic interventions (see \cite {ounpuu2015long}).


\begin{figure}[!h] \centering
   \subcaptionbox{\it Typical biomechanical analysis.}[0.48\textwidth]{\includegraphics[width=0.48\textwidth]{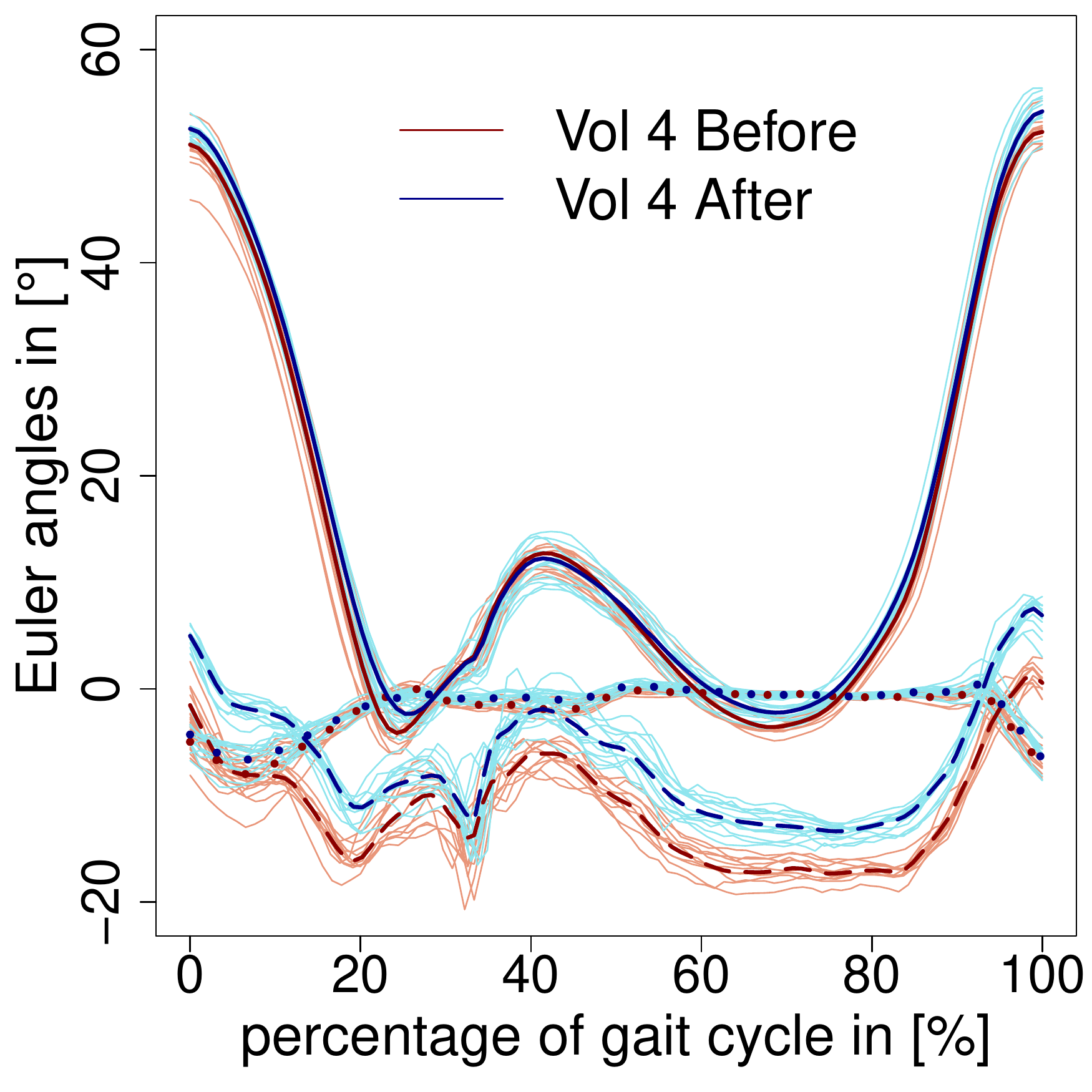}}
   \subcaptionbox{\it After temporal and spatial registration.}[0.48\textwidth]{\includegraphics[width=0.48\textwidth]{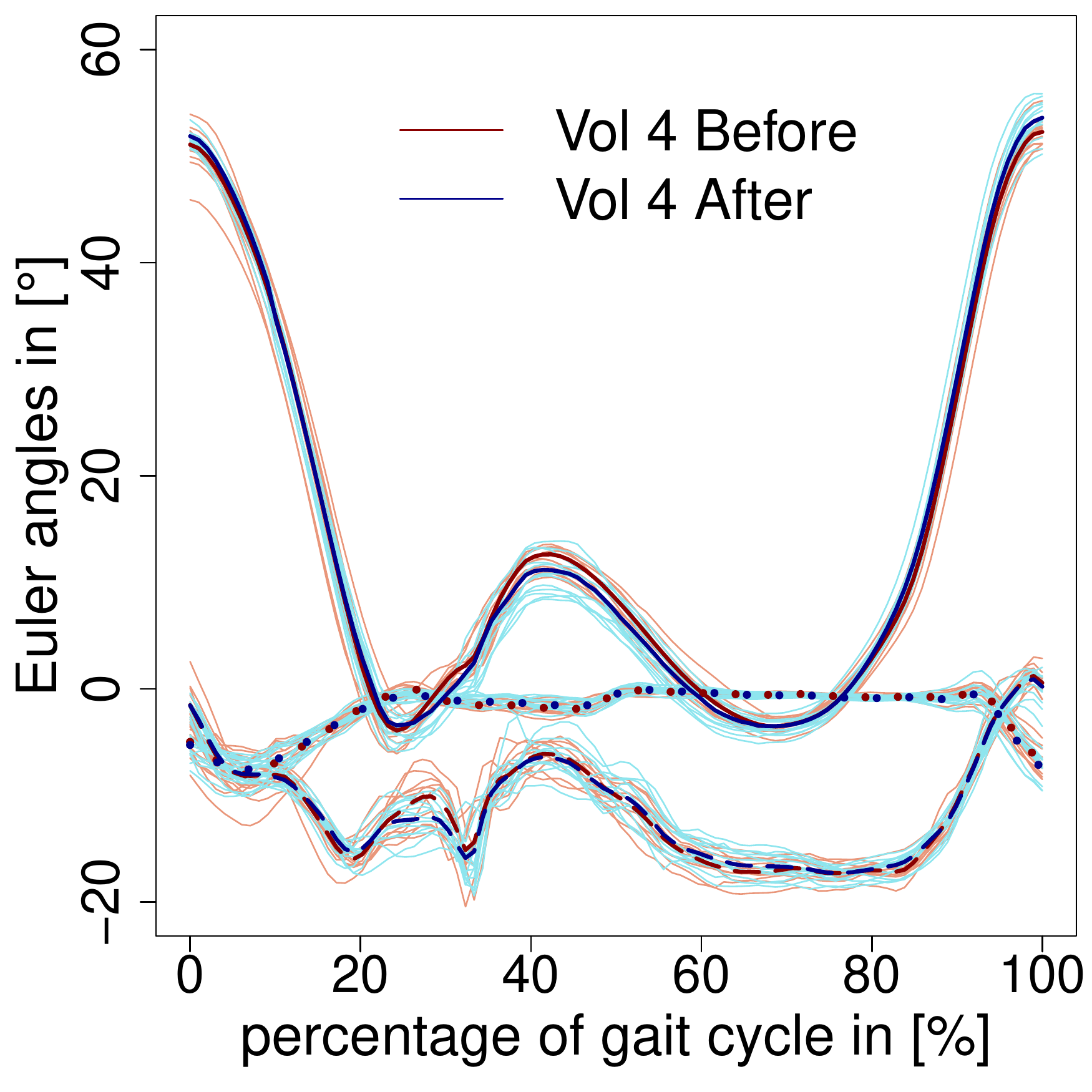}}
   \caption{\it Depicting two samples of three Euler angle curves (top: flexion-extension, middle: abduction-adduction and bottom: internal-external rotation) determining motion at the knee joint for the same subject (volunteer 4) before (red) and after specialist marker replacement (blue). Bold face curves represent sample PEM curves. Left (a): raw data as processed using typical biomechanical analysis. Note the vertical shift of the bottom internal-external rotation angles. Right (b): After temporal and spatial registration, the vertical shift has been removed along with other smaller distortions.}
   \label{fig:marker-replacement-effect}
\end{figure}

Unless invasive methods such as fluoroscopic (X-ray) imaging are used, in common practice the underlying bone motion must be estimated. Markers are placed on specific skin locations of subjects, and the 3D motion of these markers are recorded while subjects perform certain walking tasks. It is well known, however, that such methods introduce \emph{soft tissue artifacts} (e.g., \cite{Leardini2005}) because they also include skin, muscle and fat tissue motion which are superimposed over the unknown bone's motion.

We identify here another effect that influences the measurement of the underlying bone motion. Specialists place markers on anatomically defined skin locations and a calibration of relative coordinate frames is performed. However, the ability of specialists to place these markers repeatedly is questionable (see \cite{noehren2010improving} and \cite{groen2012sensitivity}). In an experiment performed we see a notable deviation between the three Euler angle curves of relative rotations, as defined in \cite{grood1983joint}, before and after specialist marker replacement, see Figure \ref{fig:marker-replacement-effect}a. Since the bottom internal-external rotation angles are considerably vertically shifted, a test based on all three Euler angles would reject the identity of the same subject with itself after trained specialist replacement. It seems that this effect has hampered the development of testing protocols that can reliably identify subjects when nothing changed but specialist \emph{marker placements} (MP). However, such identifications are a sine qua non for detecting changes in gait patterns due to degenerative effects or evaluate the effects of therapy. 
We will see that our permutation tests including spatial-temporal registration robustly remove this MP effect, see Figure \ref{fig:marker-replacement-effect}b, and allow for the development of inferential protocols in the future. These could then be applied to biomechanical analysis at other body joints as well.

Our paper is  organized as follows. In Section \ref{scn:FunctionalSO3data}, after some background information on geometry, we introduce the data space with the two group actions and define $\cC^1$ GP models with a discussion on their canonicity. We then introduce Fr\'echet mean estimators and derive their benign properties in GP models. 
Section \ref{scn:EsimtationGaitSimilarities} deals with the estimation of spatial and temporal registration. 
Having provided for all of the ingredients, we propose our two-sample tests in Section \ref{scn:Testing}, provide for simulations in Section \ref{scn:Simulations} and apply in Section \ref{scn:Application} our new methodology to a biomechanical experiment conducted in the School of Rehabilitation Science at McMaster University (Canada). 

Unless otherwise stated, all of the proofs have been deferred to the Appendix.
\section{Functional Data Model For Rotational Curves}\label{scn:FunctionalSO3data}
\subsection{The Rotational Group and Its Isometries}
In the following $\G$ denotes the compact connected Lie group of three-dimensional rotations $G=SO(3)$ which comes with the Lie algebra $\fg=\so(3)=\{A\in \R^{3\times 3}: A^T=-A\}$ of $3\times 3$ skew symmetric matrices. This Lie algebra is a three-dimensional linear subspace of all $3\times 3$ matrices and thus carries the  natural structure of $\R^3$ conveyed by the isomorphism $\iota: \mathbb R^3 \rightarrow \mathfrak{so}(3)$ given by
\begin{equation*} 
 \iota(a) = \begin{pmatrix}
  0 & -a_3 & a_2 \\
  a_3 & 0 & -a_1 \\
  -a_2 & a_1 & 0
 \end{pmatrix},\quad \mbox{ for } a = (a_1,a_2,a_3)^T \in \R^3\,.
 \end{equation*}
 This isomorphism exhibits at once the following relation
 \begin{equation}\label{iota-rotation:eq}
  Q \iota(a) Q^T= \iota(Qa)\mbox{ for all }a\in \R^3\mbox{ and }Q\in G\,.
 \end{equation}
We use the scalar product $\langle A,B \rangle=\tr\big(AB^T\big)/2$, which induces the rescaled Frobenius norm $\Vert A\Vert = \sqrt{\tr(AA^T)/2}$,  on all matrix spaces considered in this paper. On all subspaces it induces the \emph{extrinsic metric}, cf. \cite{Bhattacharya2003}. Moreover, we denote with $I_{3\times 3}$ the unit matrix. As usual, $A\mapsto \Exp(A)$ denotes the matrix exponential which is identical to the Lie exponential and gives a surjection $\fg \to G$. Due to skew symmetry, the following \emph{Rodriguez formula} holds
\begin{equation}\label{eq:rodriguez}
 \Exp(A)  = \sum_{j=0}^\infty \frac{A^j}{j!}
	  = I_{3\times 3} + \frac{\sin(\Vert A \Vert)}{\Vert A \Vert}\,A + \frac{1-\cos(\Vert A \Vert)}{\Vert A \Vert^2}\,A^2\,.
\end{equation}
This yields that the Lie exponential is bijective on $\mathcal{B}_\pi(0) = \{A\in \fg: \|A\| < \pi\}$.  For a detailed discussion, see \cite[p. 121]{Chirikjian2000}.

The scalar product on the Lie algebra is invariant under left and right multiplication from $\G$ i.e., $\langle A,B \rangle = \langle PAQ,PBQ \rangle$ for all $P,Q\in \G$ and it induces a bi-invariant Riemannian metric on all tangent spaces $\mathcal{T}_P\G$ $(P\in \G$) via
\begin{equation*}
 \langle v,w\rangle_{P} = \Big\langle \big(dL_{P}\big)^{-1} v, \big(dL_{P}\big)^{-1}w\Big\rangle\,,
\end{equation*}
where $v,w\in\mathcal{T}_P\G$, $P\in \G$ and $L_P:\G\rightarrow \G, Q\mapsto L(Q)= PQ$ is \emph{left multiplication} by $P$. 

The induced metric $d_{SO(3)}$ on $SO(3)$ can be simply computed from the extrinsic metric
\begin{align}\label{intrinsic-extrinsic-metrix:eq}
 d_{SO(3)}(P,Q)&=2\arcsin \frac{\|P-Q\|}{2}\,,
\end{align}
e.g. \citet[Section 2.2]{Stanfill2013}, who have an additional factor $1/\sqrt{2}$ in the fraction that is absorbed in our definition of the norm.

The space of isometries with respect to this metric will be denoted with $\mathcal{I}(\G)$ and its connected component containing the identity map with $\Inull$. It is well known that both of these groups are Lie groups (see \citet{MyersSteenrood1939}). 

\begin{Lem}\label{lem:isometry-group} $\Inull = \G\times \G$ which acts via $(P,Q):G\to G,~R\mapsto PRQ$. \end{Lem}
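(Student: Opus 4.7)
The plan is to embed $\G\times\G$ into $\Inull$ as a closed connected subgroup of dimension $6$ via $\phi(P,Q)(R)=PRQ$, and then close the argument by showing $\dim\Inull\le 6$. Bi-invariance of the inner product on $\fg$ propagates to $d_{SO(3)}$ through \eqref{intrinsic-extrinsic-metrix:eq}, so each $\phi(P,Q)=L_P\circ(\cdot\,Q)$ is an isometry. The composition rule $\phi(P_1,Q_1)\circ\phi(P_2,Q_2)=\phi(P_1P_2,Q_2Q_1)$ makes the image a subgroup of $\cI(\G)$ (one obtains a literal Lie group homomorphism after the harmless reparametrisation $(P,Q)\mapsto (P,Q^{-1})$), and continuity together with connectedness of $\G\times\G$ places this image inside the identity component $\Inull$.

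Next I would verify that $\phi$ has trivial kernel. If $PRQ=R$ for every $R\in\G$, then $R=I_{3\times 3}$ forces $Q=P^{-1}$, and the remaining identity $PRP^{-1}=R$ places $P$ in the centre $Z(\G)$. For $SO(3)$ the centre is trivial, since $-I_{3\times 3}$ has determinant $-1$ and is excluded, so $(P,Q)=(I_{3\times 3},I_{3\times 3})$. Hence $\phi$ is injective and, by Myers--Steenrod (which also supplies the Lie group structure on $\cI(\G)$), its image is a closed connected Lie subgroup of $\Inull$ of dimension $2\dim\G=6$.

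The main obstacle is the matching upper bound $\dim\Inull\le 6$. Here I would invoke the classical rigidity that an isometry of a connected Riemannian manifold is determined by its $1$-jet at any single point; combined with orbit--stabiliser applied to $\Inull$ acting on $\G$ at $I_{3\times 3}$, the orbit is a submanifold of $\G$ (dimension $\le 3$) and the isotropy subgroup injects into $O(\fg)\cong O(3)$ via the differential (dimension $\le 3$), which yields $\dim\Inull\le 6$. Since $\phi(\G\times\G)\subset\Inull$ is a connected closed Lie subgroup of the same dimension in a connected Lie group, equality $\phi(\G\times\G)=\Inull$ follows, proving the lemma. As a sanity check avoiding the rigidity argument, one may lift to the double cover $SU(2)\cong S^3$, whose bi-invariant metric is (up to scaling) the round one with $\cI_0(S^3)=SO(4)\cong (SU(2)\times SU(2))/\{\pm (I,I)\}$, and descend to $SO(3)=SU(2)/\{\pm I\}$; triviality of $Z(SO(3))$ erases the quotient and recovers $\Inull=\G\times\G$.
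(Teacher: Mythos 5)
Your proof is correct, but it follows a genuinely different route from the paper. The paper reduces the lemma to the theory of Riemannian symmetric spaces: it observes that $\big(\G\times\G,\diag(\G\times\G)\big)$ is a Riemannian symmetric pair with $\G\times\G$ semisimple, verifies effectiveness of the action on the quotient (this is where simplicity and triviality of the centre of $SO(3)$ enter), and then quotes Helgason's theorem identifying $\Inull$ of the symmetric space with the acting group. You instead argue from first principles: bi-invariance embeds $\G\times\G$ (after the harmless twist $(P,Q)\mapsto(P,Q^{-1})$, so that $R\mapsto PRQ$ really defines a homomorphism into $\cI(\G)$) as a closed connected $6$-dimensional subgroup of $\Inull$, with injectivity again coming from the trivial centre of $SO(3)$; the matching bound $\dim\Inull\le 6$ follows from the classical rigidity of isometries (determination by the $1$-jet at a point) together with orbit--stabiliser, and openness of an equal-dimensional closed subgroup in the connected group $\Inull$ closes the argument. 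Both proofs pivot on $Z(SO(3))=\{I_{3\times 3}\}$ at the analogous step (effectiveness for the paper, injectivity for you). What each approach buys: yours is elementary and self-contained beyond Myers--Steenrod, and it comes with the pleasant cross-check via the round $\RR P^3$ picture; but it is tied to dimension $3$, where the general bound $\tfrac{1}{2}\dim M(\dim M+1)=6$ happens to coincide with $2\dim\G$ (reflecting constant curvature of the bi-invariant metric on $SO(3)$). The paper's symmetric-pair argument does not use this coincidence and is what supports its remark that the identity component of the isometry group of $SO(n)$ is again $SO(n)\times SO(n)$ for odd $n$, where a naive dimension count of this kind would be far from sufficient.
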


\subsection{Data Space and Group Actions}

Recalling that $\G=SO(3)$ we introduce the \emph{data space}
\begin{equation*}
 \X = \mathcal{C}^1\big([0,1],\G\big)\,.
\end{equation*}
On $\X$ we have the action of isometries $\Inull = \G\times \G$ in the connectivity component of the identity -- the analog of orientation preserving Euclidean motions in a Euclidean geometry -- given by
\begin{equation*}
  \big(t\mapsto \gamma(t)\big) \stackrel{(P,Q)}{\to} \big(t\mapsto P\gamma(t)Q\big)
\end{equation*}
for all $(P,Q)\in \G\times \G$ and $\gamma \in \X$, as well as the action of the group (by composition) of time reparametrizations
\begin{equation*} 
 \Diff = \Big\{ \phi \in \mathcal{C}^\infty\big([0,1],[0,1]\big)\,\big\vert~ \phi'(t) > 0~\text{ for all } t\in (0,1)\Big\}
\end{equation*}
given by
\begin{equation*}
  \big(t\mapsto \gamma(t)\big) \stackrel{\phi}{\to} \big(t\mapsto \gamma\circ \phi (t)\big)\,.
  \end{equation*}
for all $\phi \in \Diff$ and $\gamma \in \X$. Of course, both group actions commute and we can define the direct product group
$$\Inull \times \Diff\,.$$
In the following, when we speak of the \emph{similarity group} $\Shape$ acting, we either refer to the spatial action by $\Shape = \Inull$, or the temporal action by $\Shape = \Diff$, or the joint action  by $\Shape=\Inull \times \Diff$. Any of the three group actions will be denoted by $G \stackrel{g}{\to} G, ~P \mapsto g.P$ for every $g\in\Shape$.

\subsection{Gaussian Perturbation Models}\label{subscn:rGPmodels}

\begin{Def}\label{definition:rGP}
      We say that a random curve $\gamma\in \X$ follows a \emph{Gaussian perturbation} (GP) around a \emph{center curve} $\gamma_0\in \X$, if there is a $\R^3$-valued zero-mean Gaussian processes $A_t$ with a.s. $\cC^1$ paths, such that
      \begin{equation}\label{eqdef:rGP}
	    \gamma(t)  = \gamma_0(t)\Exp\big(\iota\circ A_t\big)\,,
      \end{equation}
      for all $t\in [0,1]$. The Gaussian process $A_t$ will be called the \emph{generating process}. 
\end{Def}

From the fact that, 
\begin{equation}\label{exp-commutator:eq}
    Q\Exp(A)Q^T = \Exp(QAQ^T)\mbox{ for all }Q \in \G\mbox{ and }A\in \fg
\end{equation}
we obtain at once the following Theorem.

\begin{Th}[Invariance of GP Models under Spatial and Temporal Group Actions]\label{thm:rGPEqiv}
    Let $(P,Q) \in \Inull$, $\phi \in \Diff$ and let $\gamma\in \X$ follow a GP around a center curve $\gamma_0\in \X$. Then $P(\gamma\circ \phi)Q$ follows a GP around the center curve $P (\gamma_0\circ \phi)Q$.
\end{Th}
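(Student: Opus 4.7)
The plan is to compute $P(\gamma \circ \phi)(t)Q$ directly from the definition of a GP model and exhibit it in the required form, namely as $P(\gamma_0 \circ \phi)(t)Q$ multiplied from the right by the Lie exponential of a new generating process.

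First I would substitute $\phi(t)$ for $t$ in the defining equation (\ref{eqdef:rGP}) to obtain
\begin{equation*}
(\gamma\circ\phi)(t) = (\gamma_0\circ\phi)(t)\,\Exp\bigl(\iota\circ A_{\phi(t)}\bigr).
\end{equation*}
Next I would insert $Q Q^T = I_{3\times 3}$ between the two factors and pull $Q$ through the exponential using (\ref{exp-commutator:eq}), then apply (\ref{iota-rotation:eq}) (with $Q$ replaced by $Q^T$) to rewrite the exponent as $\iota(Q^T A_{\phi(t)})$. This yields
\begin{equation*}
P(\gamma\circ\phi)(t)Q \;=\; P(\gamma_0\circ\phi)(t)Q \,\Exp\bigl(\iota\circ B_t\bigr),
\end{equation*}
where $B_t := Q^T A_{\phi(t)}$.

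It then remains to verify that $B_t$ is itself a valid generating process in the sense of Definition \ref{definition:rGP}, i.e.\ an $\R^3$-valued zero-mean Gaussian process with a.s. $\mathcal{C}^1$ paths. Gaussianity and the zero-mean property are immediate because $B_t$ is a deterministic linear (indeed, orthogonal) transformation of the Gaussian vector $A_{\phi(t)}$, and any finite-dimensional marginal of $(B_t)$ is therefore a linear image of a finite-dimensional Gaussian marginal of $(A_t)$. For the $\mathcal{C}^1$ regularity, I would note that $A_s$ has a.s. $\mathcal{C}^1$ paths by assumption and that $\phi \in \Diff$ is in particular $\mathcal{C}^\infty$ with everywhere positive derivative, so the chain rule gives $\frac{d}{dt}A_{\phi(t)} = \phi'(t)\,A'_{\phi(t)}$ almost surely, and multiplication by the constant matrix $Q^T$ preserves this $\mathcal{C}^1$ regularity.

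I do not expect any genuine obstacle here: the argument is essentially a one-line algebraic manipulation enabled by (\ref{iota-rotation:eq}) and (\ref{exp-commutator:eq}), followed by routine checks of the distributional and regularity requirements on $B_t$. The only place where some mild care is needed is to notice that the theorem does not require the new generating process to have the same law as the original one; it merely has to satisfy Definition \ref{definition:rGP}, so the conjugation by $Q^T$ and the time reparametrization by $\phi$ (which alters the covariance structure) cause no difficulty.
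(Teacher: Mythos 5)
Your argument is correct and is exactly the route the paper intends: the theorem is stated there as following ``at once'' from (\ref{exp-commutator:eq}) together with (\ref{iota-rotation:eq}), which is precisely the manipulation you carry out, producing the new generating process $B_t=Q^TA_{\phi(t)}$. Your additional checks that $B_t$ is zero-mean Gaussian with a.s. $\cC^1$ paths (and your remark that its law may differ from that of $A_t$) are the routine details the paper leaves implicit, so nothing is missing.
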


Since $\G$ is non-commutative, GP models as introduced above are not the only canonical generalization of the models \eqref{eq:RDfunctionalModel}. Instead of the perturbation model \eqref{eqdef:rGP}, which is a detailed version of (\ref{eq:GPModel}), involving an a.s. $\mathcal{C}^1$-Gaussian process from the right, one could also consider perturbation models involving an a.s. $\mathcal{C}^1$-Gaussian process $B_t$ from the left, or even from both sides, involving a.s. $\mathcal{C}^1$-Gaussian processes $C_t$ and $D_t$ i.e.,   
\begin{align}
  \label{eqdef:lGP}		\eta(t)  			&= \Exp\big(\iota \circ B_t\big)\, \eta_0(t)\\
  \label{eqdef:fullGP} 	\delta(t)  			&= \Exp\big(\iota \circ C_t\big)\, \delta_0(t)\,\Exp\big(\iota \circ D_t\big)
\end{align}
for all $t\in [0,1]$. Again, invoking (\ref{exp-commutator:eq}) the following Theorem follows readily. 
\begin{Th}\label{thm:RightLeftEquivalence}
    Any GP model \eqref{eqdef:rGP} can be rewritten into a left Gaussian perturbation model \eqref{eqdef:lGP} with the same center curve $\gamma_0\in\X$ i.e., for any a.s. $\cC^1$ Gaussian process $A_t$ there exists an a.s. $\cC^1$  Gaussian process $B_t$ such that
    \begin{equation*}
      \gamma_0(t) \Exp\big(\iota \circ A_t\big) = \Exp\big(\iota \circ B_t\big) \gamma_0(t)
    \end{equation*}
    and vice versa.
\end{Th}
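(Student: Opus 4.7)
The plan is to use the two matrix identities provided in the excerpt, namely \eqref{exp-commutator:eq} which says $Q\Exp(A)Q^T=\Exp(QAQ^T)$ for $Q\in G$, $A\in \fg$, and \eqref{iota-rotation:eq} which says $Q\iota(a)Q^T=\iota(Qa)$ for $Q\in G$, $a\in\R^3$. Starting from the right model $\gamma(t)=\gamma_0(t)\Exp\bigl(\iota\circ A_t\bigr)$, I multiply on the right by $\gamma_0(t)^T\gamma_0(t)=I_{3\times 3}$ and push $\gamma_0(t)$ through the exponential using \eqref{exp-commutator:eq}, obtaining
\begin{equation*}
\gamma_0(t)\Exp\bigl(\iota(A_t)\bigr)=\Exp\bigl(\gamma_0(t)\,\iota(A_t)\,\gamma_0(t)^T\bigr)\gamma_0(t).
\end{equation*}
Applying \eqref{iota-rotation:eq} to the inner expression then gives $\gamma_0(t)\,\iota(A_t)\,\gamma_0(t)^T=\iota\bigl(\gamma_0(t)A_t\bigr)$. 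So the natural candidate is $B_t:=\gamma_0(t)A_t$, leading to the desired identity $\gamma_0(t)\Exp\bigl(\iota\circ A_t\bigr)=\Exp\bigl(\iota\circ B_t\bigr)\gamma_0(t)$.

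The remaining work is to verify that $B_t$ has the properties required by Definition \ref{definition:rGP}, namely that it is a zero-mean $\R^3$-valued Gaussian process with a.s. $\cC^1$ sample paths. Since $\gamma_0$ is a deterministic curve in $\X=\cC^1([0,1],G)$, at each $t$ the random vector $B_t$ is a deterministic linear transformation of $A_t$, so $\E[B_t]=\gamma_0(t)\E[A_t]=0$. For any finite set of times $t_1,\dots,t_n$, the joint vector $(B_{t_1},\dots,B_{t_n})$ equals the block-diagonal deterministic matrix $\diag\bigl(\gamma_0(t_1),\dots,\gamma_0(t_n)\bigr)$ applied to $(A_{t_1},\dots,A_{t_n})$, and linear images of Gaussian vectors are Gaussian, so $B_t$ is indeed a Gaussian process. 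Finally, because $\gamma_0$ is $\cC^1$ and $A_t$ has a.s. $\cC^1$ paths, the pointwise product $t\mapsto \gamma_0(t)A_t$ inherits a.s. $\cC^1$ paths by the product rule.

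The converse direction is entirely symmetric. Given a left GP model with a.s. $\cC^1$ zero-mean Gaussian generator $B_t$, the identities \eqref{exp-commutator:eq} and \eqref{iota-rotation:eq} applied in reverse (multiply on the left by $\gamma_0(t)\gamma_0(t)^T$, push $\gamma_0(t)^T$ inside) yield a right GP model with generator $A_t:=\gamma_0(t)^T B_t$, and the same three verifications (zero-mean, Gaussian, $\cC^1$) go through using $\gamma_0(t)^T\in\cC^1([0,1],G)$.

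I do not anticipate any serious obstacle: the heart of the argument is the conjugation identities, which are already recorded earlier in the excerpt, and the Gaussianity of $B_t$ follows because the transformation $A_t\mapsto \gamma_0(t)A_t$ is pointwise deterministic and linear. The only subtlety worth being explicit about is that Gaussianity of a process requires Gaussianity of all finite-dimensional distributions, not just of each marginal, which is why I phrase the argument in terms of the block-diagonal action on $(A_{t_1},\dots,A_{t_n})$ rather than time-by-time.
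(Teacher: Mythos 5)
Your proof is correct and follows exactly the route the paper intends: the paper gives no separate proof, stating only that the theorem ``follows readily'' from the conjugation identity \eqref{exp-commutator:eq}, which together with \eqref{iota-rotation:eq} yields precisely your $B_t=\gamma_0(t)A_t$ (and $A_t=\gamma_0(t)^TB_t$ for the converse). Your added verification that $B_t$ is a zero-mean Gaussian process with a.s. $\cC^1$ paths, including the finite-dimensional-distribution argument, is a welcome explicit filling-in of what the paper leaves implicit.
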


\begin{Th} \label{thm:FullAsymptoticEquivalence}
Let $\sigma\rightarrow 0$ be a concentration parameter. Consider a both-sided Gaussian perturbation $\delta(t)$ around a center curve $\delta_0(t)$ given by (\ref{eqdef:fullGP})
with $\max_{t\in [0,1]}\Vert C_{t} \Vert = \mathcal{O}_p(\sigma)$ and $\max_{t\in [0,1]}\Vert D_{t} \Vert = \mathcal{O}_p(\sigma)$ for all $t\in [0,1]$. Then $\delta(t)$ can be rewritten into a right Gaussian perturbation i.e.,
\begin{equation*}
  \delta(t)  =  \delta_0(t)\,\Exp\!\left(\iota\circ A_t + \iota\circ \tilde A_t\right)
\end{equation*}
with an a.s. $\cC^1$ zero-mean Gaussian process $ A_t = \delta_0(t)^TC_t + D_t $ and a suitable zero-mean a.s. $\cC^1$ process $\tilde A_t$ satisfying $\max_{t\in [0,1]}\big\Vert \tilde A_{t} \big\Vert=\mathcal{O}_p\big(\sigma^2\big)$.
\end{Th}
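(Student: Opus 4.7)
The plan is to reduce the two-sided perturbation to a right-sided one in two moves: first conjugate $\delta_0(t)^T$ across the left exponential so that $\delta_0(t)$ factors out on the left, then merge the two resulting right-factored exponentials into a single $\Exp$ via the Baker--Campbell--Hausdorff (BCH) formula. I read off $A_t$ as the linear part of the BCH series and $\tilde A_t$ as the higher-order remainder.

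For the first move I apply the conjugation identity \eqref{exp-commutator:eq} with $Q=\delta_0(t)^T$ followed by the equivariance \eqref{iota-rotation:eq}, yielding
$$\Exp(\iota\circ C_t)\,\delta_0(t) \;=\; \delta_0(t)\,\Exp\!\big(\delta_0(t)^T\iota(C_t)\delta_0(t)\big) \;=\; \delta_0(t)\,\Exp\!\big(\iota(\delta_0(t)^T C_t)\big).$$
Writing $X_t := \iota(\delta_0(t)^T C_t)$ and $Y_t := \iota(D_t)$, this recasts \eqref{eqdef:fullGP} as $\delta(t) = \delta_0(t)\,\Exp(X_t)\Exp(Y_t)$. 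Since $\delta_0(t)^T \in SO(3)$ preserves the Euclidean norm we still have $\max_t\|X_t\| = \mathcal{O}_p(\sigma)$ and likewise for $Y_t$.

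For the second move, on the event $E_\rho := \{\max_t\|X_t\|\vee\max_t\|Y_t\|<\rho\}$, where $\rho>0$ is chosen inside the BCH convergence radius in $\so(3)$, the BCH formula gives
$$\Exp(X_t)\Exp(Y_t) \;=\; \Exp\!\Big(X_t + Y_t + \tfrac12[X_t,Y_t] + R(X_t,Y_t)\Big),$$
where $R$ is a uniformly convergent series of iterated Lie brackets of total degree $\geq 3$. I set $\iota(A_t) := X_t + Y_t$, which gives exactly $A_t = \delta_0(t)^T C_t + D_t$ as claimed, and $\iota(\tilde A_t) := \tfrac12[X_t,Y_t] + R(X_t,Y_t)$. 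The process $A_t$ is zero-mean Gaussian (a deterministic $\cC^1$ transformation of the jointly Gaussian pair $(C_t,D_t)$) with a.s.\ $\cC^1$ paths. Bilinearity of the bracket and submultiplicativity of $\|\cdot\|$ give $\|[X_t,Y_t]\| = \mathcal{O}_p(\sigma^2)$ uniformly in $t$, while higher BCH terms contribute $\mathcal{O}_p(\sigma^3)$, so that $\max_t\|\tilde A_t\| = \mathcal{O}_p(\sigma^2)$; analyticity of the BCH map on a neighbourhood of the origin combined with the $\cC^1$-regularity of $X_t, Y_t$ yields $\tilde A_t \in \cC^1$ almost surely. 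Because $\Prb(E_\rho^c) \to 0$ as $\sigma \to 0$, all stochastic-$\mathcal{O}_p$ assertions are unaffected by what may happen off $E_\rho$.

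The main obstacle I expect to wrestle with is upgrading from pointwise to uniform-in-$t$ control: the $\mathcal{O}_p(\sigma^2)$ bound on $\tilde A_t$ must hold in sup-norm over $[0,1]$, and the derivative $\tfrac{d}{dt}\tilde A_t$ must be obtained by term-by-term differentiation of the BCH series. Both issues are resolved on $E_\rho$, on which $(X_t,Y_t)_{t\in[0,1]}$ stays inside a fixed compact neighbourhood of the origin where the BCH series and its partial derivatives converge uniformly, so that interchanging limits and differentiation in $t$ delivers the required sup-norm bounds and $\cC^1$ regularity.
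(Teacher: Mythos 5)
Your proposal follows essentially the same route as the paper's proof: first conjugate $\delta_0(t)^T$ across the left exponential using \eqref{iota-rotation:eq} and \eqref{exp-commutator:eq}, then expand $\Log\big(\Exp(X)\Exp(Y)\big)$ about the origin (the paper via a second-order Taylor bound for this analytic map on a compact neighbourhood, you via the explicit BCH series) on a high-probability event where the paths stay in that neighbourhood, the complementary event having vanishing probability as $\sigma\to 0$. The only minor difference is that the paper defines $\tilde A_t$ globally through a fixed right inverse of $\Exp$ (so the representation holds on all of $\Omega$, not just on $E_\rho$), a detail you could absorb with one extra line without changing your argument.
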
      
\begin{Rm} Of course, every right Gaussian perturbation \eqref{eqdef:rGP} or left Gaussian perturbation \eqref{eqdef:rGP} is also a both sided Gaussian perturbation \eqref{eqdef:fullGP}, since the deterministic process $C_t=0$ or $D_t=0$ for all $t\in [0,1]$ is a Gaussian process by definition.
\end{Rm}

\subsection{Asymptotic Properties of Center Curve Estimates}\label{subscn:EstimationCenterCurve}

For the entire paper we consider the canonical embedding $\G=SO(3)\hookrightarrow \R^{3\times 3}$. For random curves $\gamma_1,\ldots,\gamma_N,\gamma\in \X$ denote with
\begin{equation*}
 \bar \gamma_N(t)=N^{-1}\sum_{n=1}^N \gamma_n(t) \in \R^{3\times3}\,,
\end{equation*}
the usual Euclidean average curve and by $\E\big[\gamma(t)\big]\in \R^{3\times 3}$ the usual expected curve, if existent. 
\begin{Def}\label{definition:PEM}
For random  curves $\gamma_1,\ldots,\gamma_N,\gamma\in \X$, every curve $\mu: [0,1] \to \G$ is called a \emph{pointwise extrinsic Fr\'echet mean} (PEM) curve, if 
\begin{align*}
  \mu(t) &\in \argmin_{\mu\in \G}\E\!\left[\Vert \mu - \gamma(t)\Vert^2\right], ~&&\text{this is a \emph{population} PEM curve, or }\\
  \mu(t) &\in \hat E_N(t) = \argmin_{\mu\in \G}\frac{1}{N}\sum_{n=1}^N  \Vert \mu - \gamma_n(t) \Vert^2,&&\text{this is a \emph{sample} PEM curve}\,,
\end{align*}
respectively. 
\end{Def}

Due to compactness of $\G$ there are always PEM curves. Verify at once the following equivariance property.

\begin{Th}\label{theorem:EquivariancePEM}
 Every population or sample PEM curve is an equivariant descriptor under the action of $\Inull$ and $\Diff$ i.e. if $t\mapsto \mu(t)$ is a population PEM curve for $\gamma \in \X$ or a sample PEM curve for $\gamma_1,\ldots,\gamma_N \in \X$ and $(P,Q) \in \Inull$, $\phi \in \Diff$, then so is
 \begin{align*}
 t\mapsto P(\mu\circ \phi)(t)Q
 \end{align*}
 a population PEM curve for $P(\gamma\circ \phi)Q$, or a sample PEM curve for $P(\gamma_1\circ \phi)Q,\ldots,P(\gamma_N\circ \phi)Q$, respectively.
\end{Th}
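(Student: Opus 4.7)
The plan is to reduce the statement to two elementary observations: (i) for any $P,Q \in G$ and any $3\times 3$ matrix $M$, the scaled Frobenius norm satisfies $\Vert PMQ \Vert = \Vert M\Vert$, which is immediate from $\tr(PMQQ^TM^TP^T) = \tr(MM^T)$ using $P^TP = Q^TQ = I_{3\times 3}$; and (ii) the PEM defining minimization is pointwise in $t \in [0,1]$ with the constraint set $G$ independent of $t$.

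For the spatial action, fix $(P,Q) \in \Inull$ and $t \in [0,1]$. Since $\nu \mapsto P^T\nu Q^T$ is a bijection of $G$, I would reparametrize the minimization defining a PEM for $P\gamma(t)Q$ via $\nu = P\mu' Q$ and then apply (i) to obtain
\begin{align*}
\argmin_{\nu \in G} \E\bigl[\Vert \nu - P\gamma(t)Q \Vert^2\bigr]
&= \bigl\{\,P\mu' Q : \mu' \in \argmin_{\mu' \in G}\E\bigl[\Vert P(\mu' - \gamma(t))Q\Vert^2\bigr]\bigr\}\\
&= \bigl\{\,P\mu' Q : \mu' \in \argmin_{\mu' \in G}\E\bigl[\Vert \mu' - \gamma(t) \Vert^2\bigr]\bigr\}\,.
\end{align*}
If $\mu(t)$ lies in the rightmost set for every $t$, then $t \mapsto P\mu(t)Q$ lies in the leftmost set for every $t$, giving the desired equivariance. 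The argument for the sample PEM is identical with the empirical average $N^{-1}\sum_{n=1}^N$ replacing the expectation and $P\gamma_n(t)Q$ replacing $P\gamma(t)Q$.

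For the temporal action, fix $\phi \in \Diff$ and $s \in [0,1]$. Applying the hypothesis at $t = \phi(s)$ yields $\mu(\phi(s)) \in \argmin_{\mu \in G}\E[\Vert \mu - \gamma(\phi(s))\Vert^2]$, which is the statement that $(\mu\circ\phi)(s)$ is a PEM value of the reparametrized curve $(\gamma\circ\phi)(s)$; again the sample case is identical. The joint action of $\Inull \times \Diff$ is handled by concatenating the two arguments in either order, which is legitimate because the two group actions commute as noted in the text. I do not expect a genuine obstacle here: the theorem is essentially a formal consequence of the bi-invariance of the Frobenius norm on the ambient matrix space together with the pointwise, constraint-independent nature of the PEM minimization.
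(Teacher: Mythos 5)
Your argument is correct and is exactly the immediate verification the paper has in mind (the theorem is stated with ``Verify at once'' and no written proof): bi-invariance of the rescaled Frobenius norm under left/right multiplication by rotations handles the spatial action via the substitution $\nu = P\mu'Q$, and the pointwise nature of the minimization makes the time-warping case a trivial reindexing $t=\phi(s)$. Nothing to add; the sample case and the commuting joint action are handled just as you say.
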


In the following we are concerned with circumstances under which PEM curves are unique,
again members of $\X$, and consistent estimators of a GP model's center curve.
To this end recall the orthogonal projection
\begin{align*}
 \pr: \R^{3\times 3} &\to SO(3),\quad
 A \mapsto \argmin_{\mu\in \G}\|A-\mu\|^2 = \argmax_{\mu\in \G}\tr(A^T\mu)
\end{align*}
which is well defined if and only if $\rank(A)>1$. In this case it assumes the value
\begin{align*}
 \pr(A) = USV^T
\end{align*}
where $A=UDV^T$ is a singular value decomposition (SVD) with decreasing eigenvalues $d_1\geq d_2\geq d_3\geq 0$, $D=\diag(d_1,d_2,d_3)$, and
 \begin{equation*}
  S = \begin{cases}
       I_{3\times 3} 		& \mbox{if } {\rm det}(UV)=1 \\ 
      {\rm diag}(1,1,-1)        & \mbox{if } {\rm det}(UV)=-1 
      \end{cases}\,,
 \end{equation*}
e.g. \citet[Lemma, p. 377]{Umeyama1991}. 

\begin{Th}\label{theorem:ComputingPEM}
 Let $\gamma_1,\ldots,\gamma_N$ be a sample of a random curve $\gamma \in \X$. 
 Then the following hold.
 \begin{enumerate}
  \item[(i)] Their population or sample PEM curve is unique if and only if $\rank\Big(\E\big[\gamma(t)\big]\Big)>1$, or $\rank\big(\bar \gamma_N(t)\big) >1$, respectively, for all $t\in[0,1]$, and it is then given by
  \begin{align*}
   t\mapsto \pr\Big(\E\big[\gamma(t)\big]\Big),\mbox{ or } t\mapsto \pr\big(\bar \gamma_N(t)\big)\,,
  \end{align*}
  respectively;
  \item[(ii)] in that case the sample PEM is in $\X$ and the population PEM is in $\X$ if additionally $\|\dot\gamma(t)\| \leq Z_t$ with a process $Z_t$ that is integrable for all $t\in [0,1]$.
 \end{enumerate}
\end{Th}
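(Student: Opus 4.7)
The plan for part (i) is to reduce the pointwise Fr\'echet minimisation to a linear maximisation. For any $\mu\in\G$ we have $\|\mu\|^2 = \tfrac{1}{2}\tr(\mu\mu^T) = \tfrac{3}{2}$, so
\begin{equation*}
    \E\!\left[\|\mu-\gamma(t)\|^2\right] \;=\; \tfrac{3}{2} \;-\; 2\langle\mu,\,\E[\gamma(t)]\rangle \;+\; \E\!\left[\|\gamma(t)\|^2\right],
\end{equation*}
and analogously for the empirical criterion. The $\mu$-dependent part is $-2\langle\mu,A\rangle$ with $A = \E[\gamma(t)]$ in the population case and $A = \bar\gamma_N(t)$ in the sample case. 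By the definition recalled just above the statement, $\pr(A) = \argmax_{\mu\in\G}\langle\mu,A\rangle$ is the unique maximiser precisely when $\rank(A)>1$ (Umeyama), which gives (i) in both cases.

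For the sample half of (ii), since each $\gamma_n\in\X$ the curve $\bar\gamma_N$ is a $\mathcal{C}^1$ map into $\R^{3\times 3}$ whose image lies in the open set $\{A:\rank(A)>1\}$ by assumption. It then suffices to establish that $\pr$ itself is $\mathcal{C}^1$ on this set, which I would do by an implicit function theorem argument applied to the first-order optimality condition. Any stationary point $\mu$ of $\mu\mapsto\|A-\mu\|^2$ on $\G$ satisfies $\mu^T A - A^T\mu = 0$, so set $F\colon \R^{3\times 3}\times \G\to\so(3)$, $F(A,\mu) = \mu^T A - A^T\mu$. Parametrising $T_\mu\G = \mu\so(3)$ by $X\in\so(3)$, a direct calculation at $\mu=\pr(A)$ (using that $\mu^TA$ is symmetric at a stationary point) reduces $\partial_\mu F$ to the linear map $X\mapsto -(XB+BX)$ on $\so(3)$ with $B=\mu^T A$. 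Using the SVD $A=UDV^T$ and the sign-correction $S$ from the statement, $B = V\diag(d_1,d_2,\pm d_3)V^T$ with the sign equal to $\det(UV)$; conjugation by $V$ then reduces invertibility of $\partial_\mu F$ to that of $X\mapsto XD'+D'X$ on $\so(3)$ with $D'=\diag(d_1,d_2,\pm d_3)$. Under the $\iota$-identification $\so(3)\cong\R^3$ this operator is diagonal with entries $d_2\pm d_3$, $d_1\pm d_3$, $d_1+d_2$; these are all nonzero exactly when $\pr(A)$ is unique, and in particular when $\rank(A)>1$. The IFT then yields local $\mathcal{C}^1$-smoothness of $\pr$, and composition with $\bar\gamma_N$ gives a $\mathcal{C}^1$ curve in $\G$.

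For the population half of (ii) the only additional ingredient is showing $t\mapsto\E[\gamma(t)]\in\mathcal{C}^1([0,1],\R^{3\times 3})$. Writing each matrix entry as $\gamma_{ij}(t+h) - \gamma_{ij}(t) = \int_t^{t+h}\dot\gamma_{ij}(s)\,ds$ and using the domination $\|\dot\gamma(s)\|\leq Z_s$ with $Z_s$ integrable on any compact $t$-neighbourhood, a standard dominated convergence argument permits the exchange of limit and expectation, giving $\tfrac{d}{dt}\E[\gamma(t)] = \E[\dot\gamma(t)]$; a second application, combined with continuity of $\dot\gamma$, yields continuity of this derivative in $t$. Composing with $\pr$ exactly as in the sample case finishes the argument. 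The principal obstacle I anticipate is the smoothness step for $\pr$ rather than the optimisation or the differentiation under the expectation: the IFT hypothesis requires the symmetrised multiplication operator $X\mapsto XD'+D'X$ to be invertible in \emph{both} sign sectors of $\det(UV)$, and one must check that the rank assumption supplies exactly the non-degeneracy needed, with no extraneous condition on the SVD of $A$.
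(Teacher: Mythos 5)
Your part (i) is the paper's own argument: the paper's proof of (i) just points back to the reduction $\argmin_{\mu\in \G}\|A-\mu\|^2=\argmax_{\mu\in \G}\tr(A^T\mu)$ and the Umeyama lemma quoted before the theorem, which is exactly what you spell out, so there is nothing to compare there. For part (ii) you take a genuinely different route to the key analytic ingredient. Both you and the paper treat the population case identically (dominated convergence via $\|\dot\gamma(t)\|\leq Z_t$ to get $t\mapsto \E[\gamma(t)]$ of class $\mathcal{C}^1$ with derivative $\E[\dot\gamma(t)]$), and both reduce everything to smoothness of $\pr$ along the mean curve. The paper does this softly: the image of the mean curve is compact and disjoint from the closed set $\mathcal{F}=\{B:\rank(B)\leq 1\}$, hence contained in an open set avoiding $\mathcal{F}$, on which the nearest-point projection onto $SO(3)$ is analytic by the cited theorem of Dudek and Holly (1994); composition then gives $\mathcal{C}^1$. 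You instead prove differentiability of $\pr$ by hand via the implicit function theorem applied to the stationarity map $F(A,\mu)=\mu^TA-A^T\mu$, computing that $\partial_\mu F$ becomes, after conjugation and the $\iota$-identification, multiplication by $d_1+d_2$, $d_1\pm d_3$, $d_2\pm d_3$ with sign $\det(UV)$. This is correct and buys something the citation does not: an explicit criterion showing invertibility exactly where $\pr(A)$ is unique, which is precisely the standing hypothesis (``in that case'') of part (ii), so the IFT applies at every point of the curve; it also makes visible that in the $\det(UV)=-1$ sector uniqueness requires $d_2>d_3$ and not merely $\rank>1$, a subtlety glossed over by the blanket appeal to Umeyama (harmless for the paper's use, where $\det\E[\gamma(t)]>0$ in GP models). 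The one step you should add to close your variant: the IFT only yields a local $\mathcal{C}^1$ branch of stationary points, and you must identify it with $\pr$; this needs continuity of the minimizer in $A$ (a compactness/cluster-point argument as in the paper's Lemma A.2), after which minimizers for nearby $A$ lie in the IFT neighbourhood and must coincide with the branch by its local uniqueness. That addition is routine, and with it your proof is a sound, self-contained alternative to the paper's citation-based argument, at the price of losing the stronger analyticity conclusion the paper gets for free.
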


\begin{Th}[Perturbation Consistency]\label{theorem:PEMisCenterCurve}
  If a random curve $\gamma\in \X$ follows a GP around a center curve $\gamma_0\in \X$, 
  then its PEM is curve unique and it is identical to $\gamma_0$.
\end{Th}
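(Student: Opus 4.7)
The plan is to invoke Theorem \ref{theorem:ComputingPEM}: uniqueness and identification of the population PEM reduce to showing $\rank(\E[\gamma(t)]) > 1$ and $\pr(\E[\gamma(t)]) = \gamma_0(t)$. The GP assumption gives $\E[\gamma(t)] = \gamma_0(t)\, M(t)$ with $M(t) := \E[\Exp(\iota\circ A_t)]$, and since $\pr$ is equivariant under left multiplication by $SO(3)$, it suffices to show $M(t)$ has rank at least two and $\pr(M(t)) = I_{3\times 3}$.

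First I would compute the structure of $M(t)$ via the Rodriguez formula \eqref{eq:rodriguez} together with the identity $\iota(a)^2 = aa^T - \|a\|^2 I_{3\times 3}$ (a direct consequence of $\iota(a)b = a\times b$), rewriting
\begin{equation*}
\Exp(\iota(A_t)) = \cos\|A_t\|\, I_{3\times 3} + \frac{\sin\|A_t\|}{\|A_t\|}\,\iota(A_t) + \frac{1-\cos\|A_t\|}{\|A_t\|^2}\, A_t A_t^T.
\end{equation*}
Since $A_t$ is zero-mean Gaussian, $A_t \stackrel{d}{=} -A_t$; the middle summand is odd in $A_t$ while its scalar coefficient depends only on $\|A_t\|$, so its expectation vanishes, yielding $M(t) = c(t)\, I_{3\times 3} + \Psi(t)$ with $c(t) := \E[\cos\|A_t\|]$ and $\Psi(t) := \E\bigl[\tfrac{1-\cos\|A_t\|}{\|A_t\|^2} A_tA_t^T\bigr]$ symmetric positive semi-definite. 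In particular $M(t)$ is symmetric.

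To identify $\pr(M(t))$ I would use its variational form $\argmax_{S\in SO(3)}\tr(SM(t))$. Parameterizing $S = \Exp(\iota(b))$ with $b\in\R^3$ and applying Rodriguez to $S$, combined with $\tr(M(t)\iota(b)) = 0$ (from the symmetry of $M(t)$ and the skew-symmetry of $\iota(b)$), yields
\begin{equation*}
\tr(SM(t)) = \tr M(t) - (1-\cos\|b\|)\bigl(\tr M(t) - \hat b^{\,T}M(t)\hat b\bigr).
\end{equation*}
Hence $S = I_{3\times 3}$ is the unique global maximizer precisely when $\hat b^{\,T}M(t)\hat b < \tr M(t)$ for every unit $\hat b$, equivalently when the spectral condition $\lambda_2(M(t)) + \lambda_3(M(t)) > 0$ holds; this also forces $\rank M(t) \geq 2$.

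The main obstacle is establishing this spectral inequality for an arbitrary Gaussian $A_t$. I would pass to the unit-quaternion parameterization $q_0 := \cos(\|A_t\|/2)$, $\vec q := \sin(\|A_t\|/2)\hat A_t$, in which the quaternion-rotation formula together with the Gaussian symmetry $q_0\vec q \stackrel{d}{=} -q_0\vec q$ gives the equivalent expression $M(t) = (2\E[q_0^2]-1)I_{3\times 3} + 2\E[\vec q\,\vec q^{\,T}]$, converting the spectral condition into the scalar moment inequality
\begin{equation*}
\E\bigl[\cos^2(\|A_t\|/2)\bigr] \;>\; \E\bigl[\sin^2(\|A_t\|/2)(\hat A_t\cdot v)^2\bigr] \qquad\text{for every unit }v\in\R^3.
\end{equation*}
Pointwise on $\{\hat A_t \neq \pm v\}$ one has $\cos^2(\|A_t\|/2) - \sin^2(\|A_t\|/2)(\hat A_t\cdot v)^2 > \cos\|A_t\|$, and for any non-degenerate covariance $\Sigma_t$ this event has full probability under the Gaussian law; combined with an explicit moment bound exploiting the Gaussian density this gives the required inequality. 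The rank-one degenerate case ($A_t = gu$ with $g\sim\Norm(0,\sigma^2)$) is dispatched by the direct computation $M(t) = e^{-\sigma^2/2}I_{3\times 3} + (1-e^{-\sigma^2/2})uu^T$, which is manifestly positive definite.
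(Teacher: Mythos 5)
Your overall architecture is fine and, up to the last step, runs parallel to the paper: reduce via Theorem \ref{theorem:ComputingPEM} and left-equivariance of $\pr$ to the matrix $M(t)=\E[\Exp(\iota\circ A_t)]$, kill the odd Rodriguez term by the symmetry $A_t\stackrel{d}{=}-A_t$, and conclude once $M(t)$ is shown to be "sufficiently positive". Your variational identification of the projection (that $S=I_{3\times 3}$ uniquely maximizes $\tr(SM(t))$ iff $\lambda_2(M(t))+\lambda_3(M(t))>0$, which also forces $\rank M(t)\geq 2$) is correct and is a nice alternative to the paper's SVD/eigenvector-matching argument; the paper instead proves the stronger statement that $M(t)$ is symmetric positive definite. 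The rank-one degenerate case you compute explicitly is also correct.

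The genuine gap is exactly where the Gaussian hypothesis has to do work. For nondegenerate $\Sigma_t$ you must prove $\E[\cos^2(\|A_t\|/2)]>\E[\sin^2(\|A_t\|/2)(\hat A_t\cdot v)^2]$, but your argument only supplies the pointwise identity $\cos^2(\|A_t\|/2)-\sin^2(\|A_t\|/2)(\hat A_t\cdot v)^2=\cos\|A_t\|+\sin^2(\|A_t\|/2)\big(1-(\hat A_t\cdot v)^2\big)$, i.e. after taking expectations the difference equals $\E[\cos\|A_t\|]+\E[\sin^2(\|A_t\|/2)(1-(\hat A_t\cdot v)^2)]$. Almost-sure strictness of a pointwise bound against $\cos\|A_t\|$ proves nothing, because $\E[\cos\|A_t\|]$ could a priori be negative; your route thus circles back to precisely the quantity the paper spends its effort on, and the promised "explicit moment bound exploiting the Gaussian density" is never given. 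This is not a routine detail: evenness of the law alone does not suffice, since for $A_t$ uniform on $\{\pm\pi e_1,\pm\pi e_2\}$ one gets $M=\diag(0,0,-1)$, so $\lambda_2+\lambda_3<0$ and the extrinsic mean is not unique. The substantive analytic step is to show $\E[\cos\|X\|]>0$ for $X\sim\Norm(0,\Sigma)$, which the paper does by reducing, after diagonalizing $\Sigma$ and passing to polar coordinates, to the one-dimensional Gaussian--Fourier integral $\int_0^\infty e^{-r^2/2}\cos(br)\,dr=\sqrt{\pi/2}\,e^{-b^2/2}>0$ integrated over the sphere (cf. Lemma \ref{lemma:ExpSymmetry} and the proof of Theorem \ref{theorem:PEMisCenterCurve}); together with $V^TM(t)V\geq\E[\cos\|A_t\|]$ this gives positive definiteness, hence uniqueness and $\pr(M(t))=I_{3\times3}$. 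Until you carry out that (or an equivalent) Gaussian computation, your proof is incomplete at its decisive point.
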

\begin{Rm} Close inspection of the proof of the above theorem shows that perturbation consistency is valid far beyond GP models. More precisely, for every perturbation model 
   $\gamma(t) = \gamma_0(t)\Exp(\iota\circ A_t)$ on $\G$ around any central curve $\gamma_0$, induced by any $\R^3$-valued stochastic process $A_t$ satisfying
  \begin{equation*}
      \E[ A_t\, \sinc{ \Vert A_t \Vert }]=0~~ \text{  and  }~~ \E[\cos\Vert A_t\Vert] >0\,,
  \end{equation*}
  its PEM curve is unique and equal to $\gamma_0$. Moreover, $\det(\E[\gamma(t)])>0$.
\end{Rm} 
\begin{Cor}\label{corollary:PESMconsistency}
    Let $\gamma_1,\ldots,\gamma_N$ be a sample of a random curve $\gamma \in \X$ following a GP model around a center curve $\gamma_0$. Fixing $t\in [0,1]$ and choosing a measurable selection $\hat\mu_N(t)$ of the sample PEM curves at $t$, then $\hat \mu_N(t) \to \gamma_0(t)$ almost surely.
\end{Cor}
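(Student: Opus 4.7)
The plan is to reduce the corollary to a strong law of large numbers in $\R^{3\times 3}$ combined with continuity of the extrinsic projection $\pr$ at a matrix with positive determinant, and then invoke the explicit formula for the sample PEM from Theorem \ref{theorem:ComputingPEM}.

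First I would apply the Remark following Theorem \ref{theorem:PEMisCenterCurve} to obtain both that the population PEM at $t$ is unique and equals $\gamma_0(t)$, and crucially that $\det\big(\E[\gamma(t)]\big)>0$. By Theorem \ref{theorem:ComputingPEM}(i) this gives $\pr\big(\E[\gamma(t)]\big)=\gamma_0(t)$ and in particular $\rank\big(\E[\gamma(t)]\big)=3$. Next, since $\gamma(t)$ takes values in $SO(3)$, it is bounded, so the componentwise strong law of large numbers yields $\bar\gamma_N(t)\to \E[\gamma(t)]$ almost surely in $\R^{3\times 3}$.

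Then I would use continuity of the projection $\pr$ at every matrix $A$ with $\det(A)>0$. This is the technical step: when $\det(A)>0$, all three singular values of $A$ are positive and the sign case in the definition of $S$ collapses to $S=I_{3\times 3}$; the singular values and the singular subspaces depend continuously on the matrix entries on a neighbourhood of $A$ (the top singular directions are unique up to sign pairing, and $UV^T$ is insensitive to the sign choice), so $B\mapsto \pr(B)=U(B)V(B)^T$ is continuous at $A$. Applying this at $A=\E[\gamma(t)]$, there is an almost sure event on which, for all $N$ sufficiently large, $\det\big(\bar\gamma_N(t)\big)>0$, so in particular $\rank\big(\bar\gamma_N(t)\big)>1$ and the sample PEM at $t$ is unique by Theorem \ref{theorem:ComputingPEM}(i) and equal to $\pr\big(\bar\gamma_N(t)\big)$. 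Any measurable selection $\hat\mu_N(t)$ must then coincide with this unique minimizer for large $N$, so by continuity
\begin{equation*}
\hat\mu_N(t) \;=\; \pr\big(\bar\gamma_N(t)\big) \;\longrightarrow\; \pr\big(\E[\gamma(t)]\big) \;=\; \gamma_0(t)\quad\text{a.s.}
\end{equation*}

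The main obstacle is the continuity of $\pr$: the formula $\pr(A)=USV^T$ depends on the SVD, which is only defined up to sign and up to rotations in eigenspaces of repeated singular values, and the sign correction $S$ can jump as $\det(A)$ crosses zero. Staying strictly inside the open set $\{\det>0\}$, which the Remark guarantees contains $\E[\gamma(t)]$, avoids the jump, and the ambiguity in $U,V$ cancels in the product $UV^T$; this is the only place where care is required. The measurable-selection issue is then trivial, since eventually the minimizer is unique.
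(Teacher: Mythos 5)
Your argument is correct, but it takes a different route from the paper. The paper proves this corollary by citing Ziezold's (1977) general strong consistency theorem for Fr\'echet sample mean sets on the compact space $\G$, and then uses the uniqueness of the population PEM (Theorem \ref{theorem:PEMisCenterCurve}) to upgrade set convergence to convergence of any measurable selection. You instead give a direct extrinsic argument: the strong law of large numbers for the bounded matrices $\gamma_n(t)$ in $\R^{3\times3}$, the fact (from the Remark after Theorem \ref{theorem:PEMisCenterCurve}) that $\det\big(\E[\gamma(t)]\big)>0$, and continuity of the projection $\pr$ on the open set $\{\det>0\}$, so that eventually the sample mean has unique projection and $\hat\mu_N(t)=\pr\big(\bar\gamma_N(t)\big)\to\pr\big(\E[\gamma(t)]\big)=\gamma_0(t)$ a.s. This is essentially the Bhattacharya--Patrangenaru style extrinsic-mean consistency argument, and it is the same mechanism the paper later deploys for the uniform statement in Theorem \ref{theorem:PESMuniformconvergence} (where continuity/analyticity of $\pr$ off the rank-$\le 1$ set is taken from Dudek and Holly, which you could cite in place of your hand-made continuity argument; alternatively note that on $\{\det>0\}$ one has $\pr(A)=A(A^TA)^{-1/2}$, which settles the SVD-ambiguity issue cleanly). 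What the paper's route buys is brevity and generality (no need for the determinant condition, only uniqueness of the population PEM on a compact space); what your route buys is a self-contained, elementary proof that avoids invoking Ziezold's theorem and makes the eventual uniqueness of the sample PEM at $t$ explicit.
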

This, showing that the center curve can be pointwise consistently estimated, can be improved under a mild additional assumption to hold uniformly.
\begin{Th}\label{theorem:PESMuniformconvergence}
    Let $\gamma_1,\ldots,\gamma_N$ be a sample of a random curve $\gamma \in \X$ following a GP model around a center curve $\gamma_0$ and let $t\mapsto \hat \mu_N(t)$ be a measurable selection of $\hat E_N(t)$ for each time point $t\in [0,1]$. If the generating Gaussian process $A_t$  satisfies
 \begin{equation}\label{eq:uniformAssumption}
    \E\left[ \max_{t\in [0,1]} \Vert \partial_t A_t \Vert \right] <\infty\,,
 \end{equation}
then the following hold.
 \begin{enumerate}
  \item[(i)] There is $\Omega'\subset \Omega$ measurable with $\Prb(\Omega')=1$ such that for every $\omega\in \Omega'$ there is $N_\omega\in \mathbb N$ such that for all $N\geq N_\omega$, every $\hat E_N(t)$ has a unique element $\mu_N(t)$, for all $t\in [0,1]$, and $\hat \mu_N \in \X$;
  \item[(ii)] $\max_{t\in [0,1]} \big\Vert \hat\mu_N(t) - \gamma_0(t)\big\Vert \rightarrow 0$ and $\max_{t\in [0,1]} d_{\G}\big( \hat\mu_N(t), \gamma_0(t) \big) \rightarrow 0$ for $N\rightarrow\infty$ \mbox{almost surely.}
 \end{enumerate}
\end{Th}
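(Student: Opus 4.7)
The plan is to derive both parts of the theorem from a uniform functional strong law of large numbers for the Euclidean average $\bar\gamma_N$ in $\cC([0,1],\R^{3\times 3})$, combined with the behavior of the projection $\pr$ near the deterministic curve $t\mapsto \E[\gamma(t)]$. By Theorem \ref{theorem:ComputingPEM} it will suffice to show that, almost surely for all sufficiently large $N$, $\rank(\bar\gamma_N(t))=3$ uniformly in $t$, so that the sample PEM curve equals $\pr\circ \bar\gamma_N$, and that $\bar\gamma_N\to\E[\gamma(\cdot)]$ uniformly on $[0,1]$ almost surely. The remark following Theorem \ref{theorem:PEMisCenterCurve} then identifies $\pr\circ\E[\gamma(\cdot)]$ with $\gamma_0$.

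The heart of the argument is the uniform strong law. First, using the Rodriguez formula (\ref{eq:rodriguez}) together with (\ref{iota-rotation:eq}), I would derive a pathwise bound of the form $\|\partial_t\gamma_n(t)\|\leq c_1\|\partial_t\gamma_0(t)\|+c_2\|\partial_t A_{n,t}\|$, from which assumption (\ref{eq:uniformAssumption}) transfers to $\E[\max_t\|\partial_t\gamma(t)\|]<\infty$. The classical one-dimensional SLLN applied to the real random variables $M_n:=\max_t\|\partial_t\gamma_n(t)\|$ then yields that $N^{-1}\sum_n M_n$ is almost surely bounded, hence the family $\{\bar\gamma_N\}_N$ is a.s. uniformly Lipschitz; the deterministic curve $t\mapsto \E[\gamma(t)]$ is Lipschitz by dominated convergence. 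Choosing a countable dense grid $D\subset[0,1]$ and invoking the SLLN in $\R^{3\times 3}$ at each grid point (with null sets union-bounded over $D$), one obtains a measure-1 event on which $\bar\gamma_N(t)\to\E[\gamma(t)]$ at every $t\in D$, and the common Lipschitz control converts this into uniform convergence on $[0,1]$ by the standard grid/equicontinuity argument.

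For (i), the remark after Theorem \ref{theorem:PEMisCenterCurve} supplies $\det\E[\gamma(t)]>0$ for every $t$; by continuity and compactness of $[0,1]$, the smallest singular value of $\E[\gamma(t)]$ is bounded below by some $\epsilon_0>0$. Uniform convergence of $\bar\gamma_N$ then forces all singular values of $\bar\gamma_N(t)$ to exceed $\epsilon_0/2$ uniformly in $t$ once $N\geq N_\omega$; in particular $\det\bar\gamma_N(t)>0$. On the open set of matrices with positive determinant, $\pr$ admits the closed polar form $A\mapsto A(A^TA)^{-1/2}$ and is real-analytic, so $\hat\mu_N=\pr\circ\bar\gamma_N$ is the unique sample PEM by Theorem \ref{theorem:ComputingPEM} and inherits $\cC^1$ regularity from $\bar\gamma_N$, i.e. lies in $\X$.

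For (ii), uniform continuity of $\pr$ on a compact neighborhood of $\{\E[\gamma(t)]:t\in[0,1]\}$, together with the uniform SLLN and the identity $\pr(\E[\gamma(t)])=\gamma_0(t)$, yields $\max_t\|\hat\mu_N(t)-\gamma_0(t)\|\to 0$ a.s. The intrinsic statement follows immediately from (\ref{intrinsic-extrinsic-metrix:eq}) and continuity of $\arcsin$ at zero. The main obstacle I anticipate is the uniform SLLN, and specifically the transfer of the derivative bound (\ref{eq:uniformAssumption}) on $A_t$ to an almost surely integrable Lipschitz majorant for $\gamma(\cdot)$ that survives averaging; once this equicontinuity is in hand, the remaining steps are routine continuity arguments on $SO(3)$.
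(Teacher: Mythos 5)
Your proposal follows essentially the same route as the paper's proof: a.s. uniform Lipschitz control of $\bar\gamma_N$ obtained from the pathwise derivative bound and the SLLN applied to $\max_{t}\Vert \partial_t A_t\Vert$ (the paper phrases this as stochastic equicontinuity and cites Davidson's uniform convergence theorem instead of running the dense-grid argument), hence a.s. uniform convergence of $\bar\gamma_N$ to $\E[\gamma(\cdot)]$, and then uniqueness, $\cC^1$-regularity and uniform convergence of $\hat\mu_N=\pr\circ\bar\gamma_N$ via regularity of $\pr$ away from the degenerate-rank set together with perturbation consistency, exactly as in the paper. One small repair: positivity of all singular values of $\bar\gamma_N(t)$ does not by itself give $\det\bar\gamma_N(t)>0$; deduce this instead from uniform convergence, $\min_{t}\det\E[\gamma(t)]>0$ (compactness and the Remark after Theorem \ref{theorem:PEMisCenterCurve}) and continuity of the determinant --- a one-line patch that leaves your argument intact.
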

\begin{Cor}\label{corollary:PESMoftenContinuous}
With the notations and assumptions of Theorem \ref{theorem:PESMuniformconvergence} we have
 \begin{equation*}
  \lim_{N\rightarrow\infty} \Prb\big\{ t \mapsto \hat\mu_N(t)\in \X \big\} = 1\,.
 \end{equation*}
\end{Cor}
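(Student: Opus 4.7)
The plan is to read the corollary off as a direct consequence of the almost-sure statement in Theorem \ref{theorem:PESMuniformconvergence}(i) via the standard continuity-of-measure argument that turns an ``eventual'' a.s. property into convergence of probabilities.

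More precisely, I would start from the set $\Omega'\subset\Omega$ with $\Prb(\Omega')=1$ furnished by Theorem \ref{theorem:PESMuniformconvergence}(i), on which to each $\omega$ is associated an index $N_\omega\in\mathbb{N}$ such that, for every $N\geq N_\omega$, the sample PEM curve $t\mapsto\hat\mu_N(t)$ is the unique element-wise minimizer $\pr(\bar\gamma_N(t))$ for all $t\in[0,1]$, and lies in $\X$. For each $N\in\mathbb{N}$ set
\begin{equation*}
 \Omega_N = \big\{\omega\in \Omega' :\, N_\omega\leq N\big\}\,.
\end{equation*}
These events are increasing in $N$ and, since $N_\omega<\infty$ for every $\omega\in\Omega'$, they satisfy $\bigcup_{N\in\mathbb N}\Omega_N=\Omega'$.

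Next I would check measurability, which is the one bookkeeping point to nail down. The function $(\omega,t)\mapsto \bar\gamma_N(\omega,t)$ is jointly continuous in $t$ and measurable in $\omega$, and the rank condition $\rank(\bar\gamma_N(\omega,t))>1$ is an open condition on matrices; hence $\{\omega :\, \rank(\bar\gamma_N(\omega,t))>1\text{ for all }t\in[0,1]\}$ is measurable (it is the complement of the projection of the relatively closed set where the rank drops, and compactness of $[0,1]$ makes it a countable intersection over a dense sequence of $t$). On this event the sample PEM is unique and given by $\pr(\bar\gamma_N(\cdot))$, which is $\cC^1$ by Theorem \ref{theorem:ComputingPEM}(ii); in particular $\{\hat\mu_N\in\X\}$ is measurable and contains $\Omega_N$.

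Finally, by continuity of measure $\Prb(\Omega_N)\nearrow \Prb(\Omega')=1$, so
\begin{equation*}
 \Prb\{\hat\mu_N\in\X\}\geq \Prb(\Omega_N)\longrightarrow 1\,,
\end{equation*}
which is the claim. The ``main obstacle'' here is only cosmetic, namely the measurability verification for $\Omega_N$ and for $\{\hat\mu_N\in\X\}$; once the explicit formula $\hat\mu_N=\pr(\bar\gamma_N)$ from Theorem \ref{theorem:ComputingPEM} is combined with Theorem \ref{theorem:PESMuniformconvergence}(i), there is no further analytic work to do.
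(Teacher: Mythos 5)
Your argument is correct, but it follows a different route than the paper. You treat Theorem \ref{theorem:PESMuniformconvergence}(i) as a black box: the a.s.\ ``eventually for all $N\geq N_\omega$'' statement gives $\Prb\bigl(\bigcap_{M\geq N} B_M\bigr)\nearrow 1$ for the measurable events $B_N=\{\rank(\bar\gamma_N(t))>1 \text{ for all } t\in[0,1]\}$ (on which the selection is forced to be $\pr(\bar\gamma_N)\in\X$ by Theorem \ref{theorem:ComputingPEM}), and then continuity of measure from below does the rest; your measurability discussion is the only real work, and it goes through (the cleanest phrasing is that the event equals $\{\inf_{t\in D}\Vert \bar\gamma_N(t)-\mathcal{F}\Vert>0\}$ for a countable dense $D\subset[0,1]$, using continuity in $t$, compactness, and closedness of $\mathcal F$ — the ``projection of a closed set'' phrasing is unnecessary and best avoided). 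The paper instead argues quantitatively: reusing $\varepsilon_0$ from the proof of Theorem \ref{theorem:PESMuniformconvergence}, it bounds $\Prb\{\hat\mu_N\in\X\}$ from below by $1-\E\bigl[\max_{t}\Vert\bar\gamma_N(t)-\E[\gamma(t)]\Vert\bigr]/(\varepsilon_0-\varepsilon)$ via a Markov-type inequality, and then sends the expectation to zero by dominated convergence, using the a.s.\ uniform convergence \eqref{equation:uniformconvergence} together with the uniform bound $\Vert\bar\gamma_N(t)-\E[\gamma(t)]\Vert\leq\sqrt 6$. Your soft argument is shorter and needs neither the explicit bound nor dominated convergence; the paper's version yields in passing an explicit finite-$N$ lower bound on the probability in terms of $\E[\max_t\Vert\bar\gamma_N-\E[\gamma]\Vert]$, which your qualitative argument does not provide. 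Either proof is acceptable.
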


\section{Curve Registration}\label{scn:EsimtationGaitSimilarities}

With various applications in mind, for instance only performing spatial registration when time warping is considered as part of the signal, or temporal registration between samples but not within, when within sample time warping is relevant for the signal, in this section we consider both registration issues separately, with the option of arbitrary combinations.
    
We begin with a generic loss function for general Lie groups that is invariant under both group actions, which we will use for temporal registration. As we aim at keeping our methodology open for various combinations of group actions we do not aim at giving the corresponding quotient a Riemannian structure, as in \cite{SuKurtek2014}, say. Rather we then also introduce a loss function specifically tailored to the fact that the unit quaternions represent a double cover of $SO(3)$ allowing to easily perform spatial registration.

\subsection{Generic Loss Functions}\label{scn:generic-loss-fcn}

    For curves $\gamma, \eta: [0,1] \to \RR^D$ it is customary to consider the \emph{total variation} (TV) loss
    $$ (\gamma,\eta) \mapsto \int_{0}^1 \|\dot\gamma(t) - \dot\eta(t)\|\,dt$$
    if existent, which is invariant under temporal reparametrizations and spatial isometries. If $\RR^D$ is viewed as an additive Lie group, this loss has the following three canonical generalizations.
    
  \begin{Def}\label{def:ILLS}
    The \emph{intrinsic length losses} 
    on $\X$ are given by 
    \begin{align*}
	\delta_{I,1}(\gamma,\eta) &= {\rm length}(\gamma\eta^{-1})~ ~ \text{ and }  ~ ~ \delta_{I,2}(\gamma,\eta) = {\rm length}(\gamma^{-1}\eta)\,,
    \end{align*} 
    for $\gamma,\eta \in \X$. Here the length is taken with respect to the bi-invariant metric on $\G$,
       \begin{align*} 
       {\rm length}(\gamma) = \int_0^1 \|\dot \gamma(t)\|\,dt 
       \end{align*}
     and,
     in order to have a loss without choosing where to put the inverse, define
     \begin{equation*}
      \delta_{I}(\gamma,\eta) = \tfrac{1}{2}\big(\delta_{I,1}(\gamma,\eta) + \delta_{I,2}(\gamma,\eta)\big)\,.
     \end{equation*}
  \end{Def}
    
    Each of these is again invariant under temporal and spatial reparametrizing, and more. 
    
     \begin{Th}\label{theorem:InvariantIntrinsicLoss}
	Let us denote with $\delta$ either $\delta_{I,1}$, $\delta_{I,2}$ or $\delta_{I}$. Then the following hold
	\begin{enumerate}
	\item[(i)]   $\delta$ is symmetric.
	\item[(ii)]  $\delta$ is invariant under the action of $\Shape=\Inull\times \Diff$, i.e{.} 
	\begin{equation*}
	 \delta\big( (\psi,\phi).\gamma, (\psi,\phi).\eta \big) = \delta(\gamma, \eta)
	\end{equation*}
	for all $(\psi,\phi)\in \Shape$ and all $\gamma,\eta\in \X$.
	\item[(iii)] Let $P\in \G$ be arbitrary. Then $\delta_{I,1}(P\gamma,\eta)= \delta_{I,1}(\gamma,\eta)$ and $\delta_{I,2}(\gamma P,\eta)= \delta_{I,2}(\gamma,\eta)$ for all $\gamma,\eta\in \X$. 
	\item[(iv)] $\delta_{I,1}(\gamma,\eta)= 0$ if and only if $\eta=\gamma P$ for some $P\in \G$. Similarly, $\delta_{I,2}(\gamma,\eta)= 0$ if and only if $\eta=P\gamma$ for some $P\in \G$.
	\item[(v)]
   Let $\gamma, \eta\in\X$. Then we have the \emph{inverse alignment property}
   \begin{eqnarray*}
      (\phi^*,\psi^*) \in \mathop{\arginf}_{(\phi,\psi)\in \Shape} \delta(\gamma_1,(\phi,\psi).\gamma_2) 
      ~\Rightarrow~
      \big((\phi^*)^{-1},(\psi^*)^{-1}\big) \in \arginf_{(\phi,\psi)\in \Shape} \delta(\gamma_2,(\phi,\psi).\gamma_1)
   \end{eqnarray*}
	\end{enumerate}
   \end{Th}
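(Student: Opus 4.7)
My plan rests on three basic facts that I would record first. First, because the Riemannian metric on $\G = SO(3)$ is bi-invariant, for any fixed $P\in\G$ both left and right translation by $P$, as well as inversion $Q\mapsto Q^{-1}$, are isometries of $\G$; consequently, for every $\cC^1$ curve $c:[0,1]\to\G$,
\begin{equation*}
\mathrm{length}(Pc) = \mathrm{length}(cP) = \mathrm{length}(c^{-1}) = \mathrm{length}(c).
\end{equation*}
Secondly, for any $\phi\in\Diff$, the change of variables $s=\phi(t)$ (valid since $\phi'>0$ on $(0,1)$) yields $\mathrm{length}(c\circ\phi)=\mathrm{length}(c)$. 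Thirdly, a $\cC^1$ curve in a Riemannian manifold has length zero if and only if it is constant.

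With these facts, parts (i)--(iv) reduce to direct algebraic identities. For (i) I would observe that $\eta\gamma^{-1} = (\gamma\eta^{-1})^{-1}$, so bi-invariance gives $\delta_{I,1}(\eta,\gamma)=\delta_{I,1}(\gamma,\eta)$; the argument for $\delta_{I,2}$ is analogous and $\delta_I$ is symmetric by construction. For (ii), temporal invariance follows from $(\gamma\circ\phi)(\eta\circ\phi)^{-1}=(\gamma\eta^{-1})\circ\phi$ together with the second fact, while spatial invariance follows from $(P\gamma Q)(P\eta Q)^{-1}=P(\gamma\eta^{-1})P^{-1}$ combined with two applications of the first fact; the case of $\delta_{I,2}$ is analogous after replacing $\gamma\eta^{-1}$ by $\gamma^{-1}\eta$. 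Part (iii) is an immediate instance of the first fact, since $P\gamma\eta^{-1}$ (resp.\ $(\gamma P)^{-1}\eta = P^{-1}\gamma^{-1}\eta$) differs from $\gamma\eta^{-1}$ (resp.\ $\gamma^{-1}\eta$) by a fixed left translation. Part (iv) is the third fact: $\delta_{I,j}(\gamma,\eta)=0$ forces the relevant product curve to be constant in $\G$, and solving for $\eta$ gives the stated form.

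Part (v) is then a purely group-theoretic consequence of (i) and (ii). Writing $g=(\phi,\psi)\in\Shape$ with $g^{-1}=(\phi^{-1},\psi^{-1})$, applying invariance (ii) with $g^{-1}$ and then symmetry (i) gives
\begin{equation*}
\delta(\gamma_1,g.\gamma_2)=\delta(g^{-1}.\gamma_1,\,g^{-1}.(g.\gamma_2))=\delta(g^{-1}.\gamma_1,\gamma_2)=\delta(\gamma_2,\,g^{-1}.\gamma_1).
\end{equation*}
Since $g\mapsto g^{-1}$ is a bijection of $\Shape$ onto itself, taking infima over $g\in\Shape$ on both sides yields $\inf_{g}\delta(\gamma_1,g.\gamma_2)=\inf_{g}\delta(\gamma_2,g.\gamma_1)$, and the displayed chain of equalities shows that any minimiser $g^*$ of the left infimum produces the minimiser $(g^*)^{-1}$ of the right. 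I do not anticipate a real obstacle in any part; the only point requiring care is bookkeeping of sides (left versus right) and inverses when distinguishing $\delta_{I,1}$ from $\delta_{I,2}$ in (ii)--(iv).
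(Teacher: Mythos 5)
Your proposal is correct and follows essentially the same route as the paper: both arguments rest on bi-invariance of the metric (so translations and inversion preserve length), invariance of length under reparametrization, and the fact that a curve has length zero iff it is constant, with (v) obtained from symmetry plus $\Shape$-invariance exactly as the paper's "straightforward" remark intends; your (i) via inversion being length-preserving is equivalent to the paper's partition-sum argument using $d_{SO(3)}(x,y)=d_{SO(3)}(x^{-1},y^{-1})$. One bookkeeping caution on (iv): actually solving the constancy condition gives $\eta=P\gamma$ for $\delta_{I,1}$ (since $\gamma\eta^{-1}\equiv C$ forces $\eta=C^{-1}\gamma$) and $\eta=\gamma P$ for $\delta_{I,2}$ — this is what the paper's own proof derives, and it is opposite to the sides printed in the theorem's statement, so your phrase "gives the stated form" should be read with the two cases interchanged rather than taken literally.
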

 Although all three of our intrinsic loss functions fulfill the inverse alignment property, if the infimum is attained, the joint minimization is challenging and we use minimization w.r.t. $\delta$ (i.e. $\delta_{I,1}$, $\delta_{I,2}$ or $\delta_{I}$) only to obtain 
   \begin{equation}\label{eq:TwoSampleSimpleRegistration}
     \phi^*\in \arginf_{\phi\in \Diff} \delta(\gamma_1,\gamma_2\circ\phi)\,,
  \end{equation}
 using a dynamic program, following \citet{Kurtek2011,SrivastavaWu2011,SrivastavaKlassen2011}.
  
    

\subsection{Spatial Alignment} 
\paragraph{A unit quaternion point of view} allows for a direct and noniterative spatial alignment given by
\begin{align}\label{SO3-spatial-loss:eq} 
E_{\gamma,\eta} &= \argmin_{P,Q\in SO(3)} L(P\gamma Q,\eta)\,,
\end{align}
for $\gamma,\eta \in X$
with a loss function $L$ induced by a suitable loss function on the unit quaternions below in (\ref{quat-loss:def}) and (\ref{eq:ExtrinsicL2loss}). 

The unit quaternions are given by  $\SSS^3=\{x\in \RR^4: \|x\| = 1\}$ equipped with a multiplicative group structure 
via
\begin{equation*}
 \SSS^3\ni (x_1, x_2, x_3, x_4)^T~ ~ \leftrightarrow ~ ~ x_1 + ix_2 + jx_3 + kx_4
\end{equation*}
with $i^2=j^2=k^2=-1$ and $i\cdot j=-j\cdot i=k$, $k\cdot i=-i\cdot k=j$, $j\cdot k=-k\cdot j=i$ (e.g., \cite{Chirikjian2000}). 
Moreover, the map
\begin{equation*}
  \begin{aligned}
  \pi:~\SSS^3 		&\rightarrow \G\\
		    \begin{pmatrix}
				x_1 \\
				x_2 \\  
				x_3 \\
				x_4
		    \end{pmatrix}	&\mapsto \begin{pmatrix}
				1 - 2x_3^2 - 2x_4^2   &  2(x_2x_3 + x_1x_4) &  2(x_2x_4 - x_1x_3) \\
				2(x_2x_3 - x_1x_4)    &  1 - 2x_2^2-2x_4^2   &  2(x_3x_4+x_1x_2)  \\  
				2(x_2x_4+x_1x_3) &  2(x_3x_4-x_1x_2) &  1-2x_2^2-2x_3^2 )
		    \end{pmatrix}
  \end{aligned}
\end{equation*}
is a double (even universal) cover of $\G$ and a smooth surjective group homomorphism with the property $\pi(x) = \pi(-x)$ for all $x\in \SSS^3$ (see \cite{Stuelpnagel1964}). Thus, by the lifting property of covering maps (e.g., \citet[Proposition A.77, p.616]{Lee2013}) any curve $\gamma\in\X$ has exactly two continuous lifts $\tilde\gamma$ in $\SSS^3$, each uniquely determined by the choice of the starting element from $\pi^{-1}\big(\gamma(0)\big)$. 

In consequence, every continuous loss function $\wL : \mathcal{C}\big([0,1],\SSS^3\big)\times\mathcal{C}\big([0,1],\SSS^3\big)\rightarrow\R_{\geq0}$ invariant under common sign changes induces a continuous loss function $L:\X\times\X\rightarrow\R_{\geq0}$ via
\begin{align}\label{quat-loss:def}
 L(\gamma,\eta ) = \min\left\{\wL(\tilde\gamma,\tilde\eta),\wL(-\tilde\gamma,\tilde\eta)\right\}
\end{align}
where $\tilde\gamma$ and $\tilde\eta$ are arbitrary continuous lifts of $\gamma$ and $\eta$.

The following setup now allows to compare minimizers of (\ref{SO3-spatial-loss:eq}) with suitably defined minimizers for $\wL$ because the action of $\Inull$ on $\X$ lifts to the canonical left action of $SO(4)$ on continuously lifted curves $\tilde\gamma \in\mathcal{C}\big([0,1],\SSS^3\big)$ as in (\ref{eq:IsoLift}) below. The following is well known,
where (i) and (ii) below are detailed in \citet{Mebius2005} and (iii) is a consequence of (i) because the quaternion multiplication is associative. \begin{Lem}\label{lemma:Quat2Rot}
 Let $ p =(p_1,p_2,p_3,p_4)^T\in \SSS^3$ and $ q =(q_1,q_2,q_3,q_4)^T\in \SSS^3$ be arbitrary unit quaternions. Also consider arbitrary $R\in SO(4)$. Then the following hold.
 \begin{enumerate}
  \item[(i)]
 There are unique
  \begin{equation*}
  R^l_{p} = \begin{pmatrix}
                p_1  & -p_2  & -p_3 & -p_4 \\
		p_2  &  p_1  & -p_4 &  p_3 \\  
		p_3  &  p_4  &  p_1 & -p_2 \\
		p_4  & -p_3  &  p_2 &  p_1 \\
        \end{pmatrix}\in SO(4)\,,~~
  R^r_{q} = \begin{pmatrix}
                q_1  & -q_2  & -q_3 & -q_4 \\
		q_2  &  q_1  &  q_4 & -q_3 \\  
		q_3  & -q_4  &  q_1 &  q_2 \\
		q_4  &  q_3  & -q_2 &  q_1 \\
        \end{pmatrix}\,\in SO(4)
  \end{equation*}
 such that $p\cdot v = R^l_{p}v$ and $v\cdot q = R^r_{q}v$ for all $v\in \SSS^3$.
 \item[(ii)] There are, unique up to common sign change, unit quaternions $p_R,q_R\in \SSS^3$ such that $Rv = p_R\cdot v\cdot q_R$ for all $v\in \SSS^3$.
 \item[(iii)]
 With the notation from (i), there is a smooth surjective group homomorphism
 \begin{equation*}
    \begin{aligned}\ 
	\pi_{SO(4)}: \SSS^3\times \SSS^3	&\rightarrow SO(4),\quad
	(p,q) 				&\mapsto  R^l_{p} R^r_{q} = R^r_{q} R^l_{p} 
    \end{aligned}
 \end{equation*}
 with $\pi_{SO(4)}(p,q) = \pi_{SO(4)}(-p,-q)$.
  \end{enumerate}
\end{Lem}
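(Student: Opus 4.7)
For (i), the plan is direct computation. Writing $p = p_1 + ip_2 + jp_3 + kp_4$ and $v = v_1 + iv_2 + jv_3 + kv_4$ and expanding $p \cdot v$ using $i^2 = j^2 = k^2 = -1$ together with $ij = k = -ji$, $jk = i = -kj$, $ki = j = -ik$, while treating the coordinates as commuting scalars, I collect the coefficients of $1, i, j, k$ to read off the linear map $v \mapsto p\cdot v$; its $4\times 4$ matrix is precisely the $R^l_p$ of the statement, and uniqueness is automatic. That $R^l_p \in O(4)$ is immediate from $\Vert p\cdot v\Vert = \Vert p\Vert\,\Vert v\Vert = \Vert v\Vert$ for $p\in\SSS^3$; continuity of $p \mapsto R^l_p$ from the connected space $\SSS^3$ together with $R^l_1 = I_{4\times 4}$ then forces the image to lie in the identity component $SO(4)$. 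The argument for $R^r_q$ is symmetric.

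For (iii), the strategy is to derive everything from (i) together with associativity of quaternion multiplication, as the paper indicates. Commutativity $R^l_p R^r_q = R^r_q R^l_p$ is read off from both sides sending $v$ to the unambiguous product $p\cdot v \cdot q$, and smoothness is immediate from the polynomial entries in (i). The computation
\begin{equation*}
\pi_{SO(4)}(p,q)\,\pi_{SO(4)}(p',q')\,v \;=\; p\cdot p' \cdot v \cdot q' \cdot q
\end{equation*}
shows that $\pi_{SO(4)}$ is a group homomorphism when the second factor of $\SSS^3\times\SSS^3$ is endowed with the opposite multiplication. The equality $\pi_{SO(4)}(p,q)=\pi_{SO(4)}(-p,-q)$ is obvious from $(-p)\cdot v\cdot(-q) = p\cdot v\cdot q$, and conversely any $(p,q)$ in the kernel satisfies $p\cdot v\cdot q = v$ for all $v\in\SSS^3$; setting $v=1$ yields $q=p^{-1}$, after which $p\cdot v\cdot p^{-1} = v$ for all $v$ forces $p$ to be central in the quaternion algebra, so $p\in\{\pm 1\}$.

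Finally for (ii), my plan is to use (iii) and reduce the existence statement to surjectivity of $\pi_{SO(4)}$; the uniqueness of $(p_R,q_R)$ up to common sign change follows since the fibre over any $R\in SO(4)$ is a coset of the two-element kernel computed above. For surjectivity I would apply the standard open-closed argument: identifying $T_1\SSS^3$ with the purely imaginary quaternions, the differential at $(1,1)$ is $(A,B) \mapsto L_A + R_B \in \so(4)$ with $L_A, R_B$ the matrices from (i) with scalar part set to zero. A brief inspection shows the three-dimensional images of $A\mapsto L_A$ and $B\mapsto R_B$ intersect only in $0$, so by dimension count the differential is an isomorphism onto the six-dimensional $\so(4)$. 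Hence $\pi_{SO(4)}$ is a submersion near $(1,1)$, its image is a subgroup containing an open neighbourhood of the identity and therefore open; being the continuous image of the compact $\SSS^3\times\SSS^3$, it is also closed; connectedness of $SO(4)$ then forces the image to be all of $SO(4)$. The main obstacle is precisely this rank count of $d\pi_{SO(4)}$ at the identity, where the classical low-dimensional coincidence $\so(4)\cong\so(3)\oplus\so(3)$ enters the argument.
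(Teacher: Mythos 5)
Your proof is correct, but it takes a different route from the paper, which offers no proof at all: the paper simply cites Mebius (2005) for parts (i) and (ii) and remarks that (iii) follows from (i) by associativity of quaternion multiplication. You instead supply a self-contained argument, and the individual steps are sound: the direct expansion for (i) with orthogonality from multiplicativity of the quaternion norm and $\det=1$ by connectedness; commutativity and the homomorphism property in (iii) from associativity; and, most substantially, a proof of (ii) that the paper leaves entirely to the literature, via the kernel computation ($v=1$ gives $q=p^{-1}$, then centrality forces $p=\pm1$) together with surjectivity of $\pi_{SO(4)}$ by the submersion-at-the-identity and open--closed subgroup argument, where your rank count ($L_A=R_B$ forces $A=B$ central imaginary, hence $0$, so $3+3=6=\dim\so(4)$) is exactly the low-dimensional coincidence $\so(4)\cong\so(3)\oplus\so(3)$. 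What your approach buys is independence from the external reference and an explicit identification of the two-element kernel, which is what Definition 3.9 and Theorem 3.10 of the paper actually rely on. You also correctly observe a subtlety the paper glosses over: since $R^r_qR^r_{q'}=R^r_{q'q}$, the map $(p,q)\mapsto R^l_pR^r_q$ is a homomorphism only if the second factor carries the opposite group structure (equivalently, after replacing $q$ by its conjugate); this is harmless for everything the lemma is used for (smoothness, surjectivity, the $\pm$ fibre), but your formulation is the more precise one.
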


\begin{Def}\label{def:IndepProj}
With an arbitrary right inverse $r_{SO(4)}$ of the double cover $\pi_{SO(4)}\!:\SSS^3\times \SSS^3\rightarrow SO(4)$ from Lemma \ref{lemma:Quat2Rot}, define $\Pi=(\pi,\pi)\circ r_{SO(4)} : SO(4) \to \Inull$. 
\end{Def}
The above map is well defined due to $\pi(x) = \pi(-x)$. 

%
\begin{Th}\label{theorem:Lift} 
Consider curves $\gamma,\eta \in \X$ with arbitrary continuous lifts $\tilde\gamma,\tilde\eta$ and let $L$ be defined as in (\ref{quat-loss:def}) with a continuous loss function $\wL$, invariant under common sign changes. Then the following hold.
\begin{itemize}
 \item[(i)] For  $(P,Q)\in \Inull $ and arbitrary continuous lift  $\widetilde{P\gamma Q}$ of $P\gamma Q$,
\begin{itemize}
 \item[(a)] 
there is a unique $R\in SO(4)$ with the property
 \begin{equation}\label{eq:IsoLift}
  \widetilde{P\gamma Q}(t)=R \tilde\gamma(t)\mbox{ for all }t\in[0,1]\,,
 \end{equation}
 \item[(b)]
 this satisfies 
 $\Pi(R)=(P,Q)$.
 \end{itemize}
 \item[(ii)] 
$  \Pi\Bigg( \mathop{\argmin}_{R \in SO(4)} \wL\big( R\tilde\gamma, \tilde\eta \big) \Bigg)= E_{\gamma,\eta}=\argmin_{P,Q\in SO(3)} L(P\gamma Q,\eta)\,.$
 \item[(iii)] If $\wL$ is additionally symmetric and $SO(4)$-invariant then the inverse alignment property holds
  \begin{align*}(P,Q) \in E_{\gamma,\eta} \Leftrightarrow (P^{-1},Q^{-1}) \in  E_{P\gamma Q,\eta}  \end{align*}

\end{itemize}

\end{Th}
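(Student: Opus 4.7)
All three parts hinge on working upstairs in $\SSS^3$ and $SO(4)$ and transporting statements down via the double covers $\pi$ and $\Pi$, keeping careful track of sign ambiguities. For \textbf{(i)(a)}, the plan is to lift $(P,Q)\in\Inull$ through $(\pi,\pi)$ to a pair $(p,q)\in\SSS^3\times\SSS^3$ and set $R_0=R^l_pR^r_q=\pi_{SO(4)}(p,q)\in SO(4)$ as in Lemma \ref{lemma:Quat2Rot}(iii). The identity $R_0v=p\cdot v\cdot q$ combined with the homomorphism property of $\pi$ yields $\pi\big(R_0\tilde\gamma(t)\big)=P\gamma(t)Q$, so $R_0\tilde\gamma$ is a continuous $\SSS^3$-valued lift of $P\gamma Q$. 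Since any two continuous lifts of $P\gamma Q$ differ by a global sign, the prescribed lift $\widetilde{P\gamma Q}$ coincides with $R_0\tilde\gamma$ or with $-R_0\tilde\gamma$; in the second case I replace $R_0$ by $-R_0=R^l_{-p}R^r_q\in SO(4)$ to obtain the required $R$. Uniqueness then reduces to: $R_1\tilde\gamma=R_2\tilde\gamma$ pointwise forces $R_1=R_2$, which holds as soon as $\sspan\{\tilde\gamma(t):t\in[0,1]\}=\R^4$. Part \textbf{(i)(b)} is immediate from Definition \ref{def:IndepProj}: writing $R=\pi_{SO(4)}(\pm p,q)$, the identity $\pi(x)=\pi(-x)$ gives $\Pi(R)=(\pi(\pm p),\pi(q))=(P,Q)$ regardless of the sign choice implicit in $r_{SO(4)}$.

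For \textbf{(ii)}, the bridge identity I aim to establish is
\begin{equation*}
L(P\gamma Q,\eta)\;=\;\min_{R\in\Pi^{-1}(P,Q)}\wL(R\tilde\gamma,\tilde\eta).
\end{equation*}
The reason is that, by (i)(a), the two continuous lifts $\pm\widetilde{P\gamma Q}$ appearing in definition \eqref{quat-loss:def} of $L$ are exactly $R\tilde\gamma$ and $(-R)\tilde\gamma$, and the fibre $\Pi^{-1}(P,Q)=\{R,-R\}$ consists of these two elements; so the inner minimum over signs matches the minimum over the fibre. Minimising further over $(P,Q)\in\Inull$ and using surjectivity of $\Pi$ collapses this to $\min_{R\in SO(4)}\wL(R\tilde\gamma,\tilde\eta)$. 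To match argmin sets, I would trace both inclusions through this chain: if $R^*$ is an overall minimiser then $\Pi(R^*)$ realises $L(\Pi(R^*).\gamma,\eta)=\min L$, whereas if $(P,Q)\in E_{\gamma,\eta}$ then some element of the two-point fibre $\Pi^{-1}(P,Q)$ attains $\wL(R^*\tilde\gamma,\tilde\eta)=L(P\gamma Q,\eta)=\min\wL$.

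For \textbf{(iii)}, I would argue upstairs first and then push down through $\Pi$. Symmetry and $SO(4)$-invariance of $\wL$ give, for any $R'\in SO(4)$,
\begin{equation*}
\wL(R'\tilde\gamma,\tilde\eta)=\wL\big(\tilde\gamma,(R')^{-1}\tilde\eta\big)=\wL\big((R')^{-1}\tilde\eta,\tilde\gamma\big),
\end{equation*}
so substituting $R'\mapsto (R')^{-1}$ yields the equivalence $R\in\argmin_{R'}\wL(R'\tilde\gamma,\tilde\eta)\Longleftrightarrow R^{-1}\in\argmin_{R'}\wL(R'\tilde\eta,\tilde\gamma)$. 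Because $R(v)=p\cdot v\cdot q$ entails $R^{-1}(v)=p^{-1}\cdot v\cdot q^{-1}$, one has $\Pi(R^{-1})=(P^{-1},Q^{-1})$, and pushing the equivalence through $\Pi$ via (ii) gives the claimed inverse alignment property (relating the argmin problem at $(\gamma,\eta)$ to the corresponding problem at the transformed pair).

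The two obstacles I would flag are, first, the uniqueness clause of (i)(a), which tacitly requires $\tilde\gamma$ to have full linear span in $\R^4$, otherwise uniqueness must be read modulo the pointwise stabiliser of $\tilde\gamma$ in $SO(4)$; and second, the careful set-level matching of argmins in (ii), where one must exploit both the two-element fibre structure of $\Pi:SO(4)\to\Inull$ and the two-element fibre structure of $\pi:\SSS^3\to SO(3)$, so that equality of minimal values is upgraded to equality of minimiser sets rather than merely containment.
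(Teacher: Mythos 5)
Your proposal is correct and follows essentially the same route as the paper: construct $R$ from quaternion lifts $p,q$ of $(P,Q)$ and fix the global sign using continuity of the lifts (the paper phrases this via a continuous sign function $\epsilon:[0,1]\to\{-1,1\}$ that must be constant), transfer minima and minimiser sets through the two-element fibres $\Pi^{-1}(P,Q)=\{R,-R\}$ (the paper does this via a three-term chain of min-inequalities, your fibre bookkeeping just makes the argmin-set equality more explicit), and obtain (iii) from symmetry and $SO(4)$-invariance upstairs, which the paper dismisses as ``straightforward''. Your two flagged caveats are fair but harmless: literal uniqueness in (i)(a) does require a nondegeneracy condition (a lift whose span has dimension at least three already suffices, since an element of $SO(4)$ fixing a three-dimensional subspace pointwise is the identity; the paper's proof, like yours, only constructs $R$, and downstream only existence together with $\Pi(R)=(P,Q)$ for that constructed $R$ is used), and in (iii) your argument delivers $(P^{-1},Q^{-1})\in E_{\eta,\gamma}$, i.e.\ the swap form of inverse alignment analogous to Theorem \ref{theorem:InvariantIntrinsicLoss}(v), which is the meaningful reading of the printed $E_{P\gamma Q,\eta}$ (taken literally, that version would instead force $(I_{3\times3},I_{3\times3})\in E_{\gamma,\eta}$).
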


\paragraph{Asymptotic spatial alignment for two samples.}
Consider the $SO(4)$ invariant loss
%
\begin{equation}
\begin{aligned}\label{eq:ExtrinsicL2loss}
   \wL_2:\mathcal{C}\big([0,1],\SSS^3\big)\times\mathcal{C}\big([0,1],\SSS^3\big)~~	&\rightarrow	 ~~~~~~~~	\R_{\geq0}\\
					  (\tilde\gamma,\tilde\eta) ~~~~~~~~~~~	&\mapsto 	~~	\int_{0}^1 \Vert \tilde\gamma(t)-\tilde\eta(t) \Vert^2\,dt\,,
\end{aligned}
\end{equation}
which, via (\ref{quat-loss:def}), induces the loss $L_2$ on $\X$.

Then,  as in before Section \ref{subscn:EstimationCenterCurve}, the minimization of $\wL_2(R\tilde\gamma,\tilde\eta)$ over $R\in SO(4)$ is equivalent to maximizing $\tr(RH_{\tilde\gamma,\tilde\eta})$ with
\begin{align*} H_{\tilde\gamma,\tilde\eta} = \int_0^1 \tilde\eta(t)\tilde\gamma(t)^T \,dt\,.
\end{align*}
This is, again alluding to  \citet[Lemma, p. 377]{Umeyama1991}, solved by a SVD, from which, in conjunction with Theorem \ref{theorem:Lift} we obtain at once the following. 
%
\begin{Th}\label{theorem:computingMPestim}
 Let $\gamma, \eta\in \X$ with arbitrary continuous lifts $\tilde \gamma, \tilde \eta$ to $\SSS^3$ and let $L_2$ be the loss on $\X$ induced by $\wL_2$ through (\ref{quat-loss:def}). Then, all elements in $E_{\gamma,\eta}$ are of the form $\Pi\big( U S V^T \big)$, where $U,V\in O(4)$, together with $D={\rm diag}(d_1,...,d_4)$, $d_1\geq...\geq d_4\geq0$, are from a SVD, $H_{\tilde\gamma,\tilde\eta}=UDV^T$, and
 \begin{equation*}
  S = \begin{cases}
       I_{4\times4} 		& \mbox{if } {\rm det}(UV)=1 \\ 
      {\rm diag}(1,1,1,-1)        & \mbox{if } {\rm det}(UV)=-1 
      \end{cases}\,.
 \end{equation*}
 If additionally ${\rm rank}(H_{\tilde\gamma,\tilde\eta}) >2$, then $(P^*,Q^*) \in E_{\gamma,\eta}$ is unique. 
\end{Th}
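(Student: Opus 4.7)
The plan is to reduce the problem to an orthogonal Procrustes maximisation on $SO(4)$ and then descend via $\Pi$. By Theorem~\ref{theorem:Lift}(ii), $E_{\gamma,\eta}=\Pi\bigl(\arginf_{R\in SO(4)}\wL_2(R\tilde\gamma,\tilde\eta)\bigr)$, so the entire argument takes place on $SO(4)$.

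First I would expand the loss. Since $\tilde\gamma(t),\tilde\eta(t)\in\SSS^3$ and $R\in SO(4)$ is orthogonal, pointwise
\begin{equation*}
\|R\tilde\gamma(t)-\tilde\eta(t)\|^2 = 2 - 2\,\tilde\eta(t)^T R\tilde\gamma(t),
\end{equation*}
so integrating over $[0,1]$ and using cyclicity of the trace turns the minimisation of $\wL_2(R\tilde\gamma,\tilde\eta)$ into the Procrustes problem $\max_{R\in SO(4)}\tr(RH_{\tilde\gamma,\tilde\eta})$ already noted in the paragraph preceding the theorem. To solve it I would invoke the four-dimensional analogue of \citet[Lemma, p.~377]{Umeyama1991}: writing the SVD $H_{\tilde\gamma,\tilde\eta}=UDV^T$, the standard substitution reduces the objective to $\sum_{i=1}^4 d_i\tilde R_{ii}$ over the diagonal entries of an $O(4)$-matrix $\tilde R$ with $\det(\tilde R)=\det(UV)$, and this is maximised by $\tilde R=S$ as defined in the theorem, yielding $R^*=USV^T$.

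For uniqueness under the hypothesis $\rank(H_{\tilde\gamma,\tilde\eta})>2$, i.e.\ $d_3>0$, optimality forces the first three diagonal entries of $\tilde R$ to equal $1$; orthogonality then restricts the first three columns of $\tilde R$ to $e_1,e_2,e_3$, the fourth column to $\pm e_4$, and the determinant constraint pins down the sign to $S_{44}$, so $\tilde R=S$. A short check confirms that the answer $USV^T$ is independent of the choice of SVD: flipping the sign of the fourth column of $U$ or of $V$ flips $\det(UV)$ and hence $S_{44}$ in a matched way, leaving the product $USV^T$ unchanged. Uniqueness of $(P^*,Q^*)=\Pi(R^*)\in E_{\gamma,\eta}$ then follows because $\Pi$ is well defined, thanks to $\pi(x)=\pi(-x)$.

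The step I expect to be the main obstacle is precisely this SVD-invariance in the borderline rank-$3$, $\det(UV)=-1$ case: the fourth singular vectors are defined only up to simultaneous sign flips, and one must verify that the paired flip in $S$ cancels exactly in $USV^T$. Everything else follows directly from the cited Procrustes lemma and Theorem~\ref{theorem:Lift}.
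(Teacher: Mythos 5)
Your proposal is correct and takes essentially the same route as the paper, which gives no separate appendix proof but obtains the theorem exactly as you do: reduce the $\wL_2$-minimisation over $SO(4)$ to the trace (Procrustes) maximisation involving $H_{\tilde\gamma,\tilde\eta}$, solve it via the SVD lemma of \citet[Lemma, p.~377]{Umeyama1991} (which already contains the uniqueness statement for $\rank\geq 3$ that you re-derive), and descend to $E_{\gamma,\eta}$ through Theorem~\ref{theorem:Lift}(ii); your additional checks of uniqueness and of independence from the choice of SVD are sound. One bookkeeping point: with the paper's convention $H_{\tilde\gamma,\tilde\eta}=\int_0^1\tilde\eta(t)\tilde\gamma(t)^T\,dt$, the expanded objective is $\tr\big(R^T H_{\tilde\gamma,\tilde\eta}\big)$ rather than $\tr\big(R H_{\tilde\gamma,\tilde\eta}\big)$, and it is this version of the substitution ($\tilde R=U^TRV$) that yields the maximiser $R^*=USV^T$ as stated, whereas a literal maximisation of $\tr(RH_{\tilde\gamma,\tilde\eta})$ would give $VSU^T$.
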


\begin{Rm}\label{H:rm}
Note that ${\rm rank}(H_{\tilde\gamma,\tilde\eta})$ is independent of the particular choice of continuous lifts. Moreover in view of our two sample test, under the null hypothesis $\eta_0=P\gamma_0Q$ for some $P,Q\in SO(3)$, we have
$H_{\tilde\gamma_0,\tilde\eta_0} = R$ with suitable $R\in SO(4)$ due (\ref{eq:IsoLift}), which has maximal rank.\end{Rm}

\begin{Th}\label{theorem:GeneralConsistency}
 Assume $\gamma_1,..., \gamma_N\iid \gamma$, $N\in \mathbb{N}$ and $\eta_1, ..., \eta_{M}\iid \eta$, $ M\in \mathbb{N}$ are samples from GP models with center curves $\gamma_0$ and $\eta_0$ respectively, such that that their generating Gaussian processes fulfill Assumption \eqref{eq:uniformAssumption} of Theorem \ref{theorem:PESMuniformconvergence}. Further suppose that $\hat\gamma_N$ and $\hat\eta_M$, respectively, are pointwise measurable selections of the sample PEM curves. Denoting with $\tilde \gamma_0,\widetilde{\hat \gamma_{N}}, \tilde \eta_0,\widetilde{\hat \eta_{M}}$ arbitrary continuous lifts of $\gamma_0,\hat\gamma_N,\eta_0,\hat\eta_M$, under the additional assumption of $\rank(H_{\tilde\gamma_0,\tilde \eta_0}) >2$, we have for every measurable selection $(\hat P_{N, M},\hat Q_{N, M}\Big)\in E_{\hat\gamma_N,\hat\eta_M}$ that 
 \begin{equation*}
      \Big(\hat P_{N, M},\hat Q_{N, M}\Big)\xrightarrow{N,M\rightarrow\infty} \big(P^*,Q^*\big) ~ ~ a.s.\,,
 \end{equation*}
 where $(P^*,Q^*)$ is the unique element in $E_{\gamma_0,\eta_0}$.
\end{Th}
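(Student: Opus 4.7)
The plan is a three-step reduction of the strong uniform consistency of the center curve estimators (Theorem~\ref{theorem:PESMuniformconvergence}) through convergence of the $4\times 4$ cross-correlation matrices in quaternion coordinates, to convergence of the orthogonal minimizer in $SO(4)$, and finally, by applying $\Pi$, to convergence in $\Inull$. The workhorse is continuity of the SVD-based maximizer of $R\mapsto \tr(RH)$ on $SO(4)$ at matrices $H$ of rank exceeding two, which by Theorem~\ref{theorem:computingMPestim} is exactly where uniqueness of the maximizer holds.

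First I would lift the uniform consistency to $\SSS^3$. Theorem~\ref{theorem:PESMuniformconvergence} supplies almost sure convergence $\max_{t}\|\hat\gamma_N(t)-\gamma_0(t)\|\to 0$ and the analogue for $\hat\eta_M$. Since $\pi:\SSS^3\to SO(3)$ is a double cover whose two sheets are antipodal, for any continuous lift $\widetilde{\hat\gamma_N}$, once $N$ is large exactly one of $\pm\widetilde{\hat\gamma_N}(0)$ is close to $\tilde\gamma_0(0)$; by continuity in $t$ and connectedness of $[0,1]$, that sign choice is consistent throughout and yields $\max_{t}\|\widetilde{\hat\gamma_N}(t)-\tilde\gamma_0(t)\|\to 0$ almost surely. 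As the loss (\ref{quat-loss:def}) and the set $E_{\gamma,\eta}$ are invariant under global sign flips of either lift, we may assume the chosen lifts are aligned in this way, and analogously for $\widetilde{\hat\eta_M}$. Unit quaternions being bounded, dominated convergence then gives
\begin{equation*}
H_{N,M}:=\int_0^1 \widetilde{\hat\eta_M}(t)\widetilde{\hat\gamma_N}(t)^T\,dt \;\longrightarrow\; \int_0^1\tilde\eta_0(t)\tilde\gamma_0(t)^T\,dt=:H^{*}
\end{equation*}
almost surely as $N,M\to\infty$.

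The next step is to invoke continuity of the SVD-minimizer. The assumption $\rank(H^{*})>2$ is equivalent to the third largest singular value of $H^{*}$ being strictly positive; by continuity of singular values this persists throughout a neighborhood of $H^{*}$, so, by Theorem~\ref{theorem:computingMPestim}, the unique maximizer $\hat R(H)$ of $R\mapsto \tr(RH)$ on $SO(4)$ is single-valued throughout that neighborhood. Standard argmax stability under a compact constraint and continuous objective then gives $\hat R(H_{N,M})\to R^{*}:=\hat R(H^{*})$ almost surely. Finally, because $\pi_{SO(4)}$ is a smooth double cover, it admits a continuous right inverse on a neighborhood of $R^{*}$, and post-composition with $(\pi,\pi)$ yields a continuous local version of $\Pi$ there that is independent of the local choice of section (by $\pi(x)=\pi(-x)$). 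Consequently
\begin{equation*}
(\hat P_{N,M},\hat Q_{N,M})=\Pi(\hat R(H_{N,M}))\longrightarrow \Pi(R^{*})=(P^{*},Q^{*})\quad\text{a.s.}
\end{equation*}

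The main difficulty is the careful bookkeeping of the sign ambiguity in the quaternion lifts: different sign choices yield different $H$-matrices and different optima in $SO(4)$, so one must verify that they all collapse to the same element of $\Inull$ after applying $\Pi$, and that the sample-side sign choice can always be forced to match the population one once $N,M$ are large. Once this is handled, the convergence follows from standard stability of unique maximizers under uniform convergence of the objective, with the rank hypothesis supplying precisely the uniqueness that makes the continuous-mapping argument go through.
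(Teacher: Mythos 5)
Your argument is correct, and it rests on the same three pillars as the paper's proof — uniform a.s. consistency of the sample PEM curves (Theorem \ref{theorem:PESMuniformconvergence}), uniqueness of the population registration under the rank condition (Theorem \ref{theorem:computingMPestim}), and stability of unique optimizers over a compact set — but you execute it along a genuinely different route. The paper never lifts anything inside the proof: it defines $f\big((P,Q),(\gamma_1,\gamma_2)\big)=L(P\gamma_1 Q,\gamma_2)$ on the compact group $\Inull$ times $\X\times\X$, quotes the continuity of $L$, and applies a single abstract argmin-stability lemma (Lemma \ref{Lem:Max}, from Chang 1986) with $y_N=(\hat\gamma_N,\hat\eta_M)\to(\gamma_0,\eta_0)$ in the uniform topology. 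You instead push everything through the finite-dimensional summary statistic: align the quaternion lifts' signs, show $H_{N,M}\to H^{*}$, invoke uniqueness and argmax stability for $R\mapsto\tr(RH)$ on $SO(4)$ (where the rank condition visibly persists under perturbation of $H$), and finally transport the limit through $\Pi$, checking that the lift-sign and section ambiguities collapse since $\Pi(R)=\Pi(-R)$. What your route buys is concreteness: the only continuity needed is that of $H$ in the curves and of the Procrustes maximizer near a rank-$>2$ matrix, so the role of the hypothesis $\rank(H_{\tilde\gamma_0,\tilde\eta_0})>2$ is completely transparent, and the proof is self-contained modulo Umeyama's lemma. What the paper's route buys is brevity and generality: by staying with the loss $L$ on $SO(3)$-valued curves it avoids all lift/sign bookkeeping (which you correctly identify as the delicate part of your approach), and the same two-line argument would work verbatim for any continuous loss with a unique population minimizer, not just $\wL_2$. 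One small point worth making explicit in your write-up is that the sample PEMs are a.s. eventually in $\X$ (Theorem \ref{theorem:PESMuniformconvergence}(i) or Corollary \ref{corollary:PESMoftenContinuous}), so that the continuous lifts you manipulate indeed exist for large $N,M$.
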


\begin{Rm}
The only ingredient of the GP model needed in the proof of Theorem \ref{theorem:GeneralConsistency} is the uniform convergence of sample PEM curves to a unique population PEM. Moreover, in case of $\eta_0 = P\gamma_0Q$ for $P,Q\in SO(3)$, due to Remark \ref{H:rm}, the condition $\rank(H_{\tilde\gamma_0,\tilde \eta_0}) >2$ is always fulfilled.
\end{Rm}



\section{Two Sample Permutation Tests}
\label{scn:Testing}

Among others, the following permutation tests are based on intrinsic loss functions from Definition \ref{def:ILLS} such that  $\delta$ denotes either $\delta_{I,1}$, $\delta_{I,2}$ or $\delta_{I}$.

Assume for the following that $\chi_1=\{\gamma_1,...,\gamma_N\}$, $N\in \N$, and $\chi_2=\{\eta_1,...,\eta_M\}$, $M\in \N$, denote samples of GP models with center curves $\gamma_0$ and $\eta_0$ respectively. By virtue of Corollary \ref{corollary:PESMoftenContinuous} we may assume that all of the following sample PEMs are unique elements in $\X$. We want to test
\begin{equation}\label{eq:NullHypoRGP}
 H_0:~~\exists g\in \Shape:~ \eta_0=g.\gamma_0~~~ ~vs.~~~ ~H_1:~~\forall g\in \Shape:~ \eta_0\neq g.\gamma_0\,.
\end{equation}
For a given significance level $\alpha\in(0,1)$ we propose the following tests, first for  $\Shape = 1$ (no group action), and then for $\Shape = \Inull\times \Diff$ (joint group action). Tests for other selections of the groups acting are  similar.

\begin{Test}[No Group Action]\label{test:ILLP}
 Let $\chi = \chi_1\cup \chi_2$ be the pooled sample. Denote with $\chi_1^{(l)}$, $l\in\big\{1,...,\binom{N+M}{N}\big\}$, all possible choices with $N$ elements from $\chi$ and with $\chi_2^{(l)}$ its complement in $\chi$ with $M$ elements, such that $\chi_1^{(1)}=\chi_1$ and $\chi_2^{(1)}=\chi_2$. For each $l$ compute the sample PEMs $t\mapsto\hat\gamma_{N,l}(t)$ of $\chi_1^{(l)}$ and $t\mapsto\hat\eta_{M,l}(t)$ of $\chi_2^{(l)}$, respectively, and compute $d_l=\delta\big(\hat\gamma_{N,l},\hat\eta_{M,l}\big)$.
 
 Setting $r = \#\{ l \,\vert~ d_l > d_1\} $, reject the null hypothesis of equality of $\gamma_0$ and $\eta_0$ at significance level $\alpha$ if $r/\binom{N+M}{N} < \alpha$.
\end{Test}

In order to include group actions, we modify Test \ref{test:ILLP} by first estimating the group element $g$ via the following sample alignment procedure.

\begin{Proc}[Sample Alignment:  $\chi_1$ to $\chi_2$]\label{Proc1} ~
\begin{enumerate}
\item Compute the PESMs $\hat\gamma_N$ and $\hat \eta_M$ of $\chi_1$ and $\chi_2$, respectively by Theorem \ref{theorem:ComputingPEM}.
\item Compute the spatial estimator $\hat\psi=\big(\hat P_\psi, \hat Q_\psi\big)\in E_{\hat\gamma_N,\hat\eta_M}$ from Theorem \ref{theorem:computingMPestim}.
\item Compute the temporal estimator $\hat\phi$ from \eqref{eq:TwoSampleSimpleRegistration} for the curves $\big(\hat\psi,id_{[0,1]}\big).\hat\gamma_N$ and $\hat \eta_M$.
\item Define the new sample $\chi_1 = \big\{\big(\hat\psi,\hat\phi\big).\gamma_1,...,\big(\hat\psi,\hat\phi\big).\gamma_N\big\}$
\item Plug this into 
Step 1 
and repeat Steps 2-5 until convergence of $\hat\psi$ and $\hat\phi$.
\item Return $\chi_1$ as $\chi^{aligned}_1$.
\end{enumerate}
\end{Proc}

\begin{Test}[Joint Group Action with Preregistration]\label{test:PILLP}
Apply Test \ref{test:ILLP} to the aligned samples $\big(\hat\psi,\hat\phi\big).\chi_1$ and $\chi_2$ where $(\hat \psi,\hat\phi)$ has been determined by the Sample Alignment Procedure \ref{Proc1}.
\end{Test}

Since extrinsic distances and  the  distance $\delta$ are minimized in the alignment, the $\delta$-distance
between $\chi_1^{aligned}$ and $\chi_2$ tends to be smaller then that of permuted subsamples, giving Test \ref{test:PILLP} a level lower than $\alpha$, cf. Table \ref{table:MILLP} (c). In order to keep the level, we perform alignment for each permutation.

\begin{Test}[Joint Group Action with Continual Registration]\label{test:MILLP} ~
 Let $\chi = \chi_1\cup \chi_2$ be the pooled sample. Denote with $\chi_1^{(l)}$, $l\in\big\{1,...,\binom{N+M}{N}\big\}$, all possible choices with $N$ elements from $\chi$ and with $\chi_2^{(l)}$ its complement with $M$ elements in $\chi$ such that $\chi_1^{(1)}=\chi_1$ and $\chi_2^{(1)}=\chi_2$. 
 

 For $i,j\in\{1,2\}$ let $\chi_{ij}^{(l)}$ denote the sample consisting of all trials of $\chi_i^{(l)}$ also belonging to $\chi_j$. For each $l$, $i$ compute the sample PEM $\hat\eta_{i,l}$ of $\chi_{i2}^{(l)}$.

 For each $i,l$ apply the Sample Alignment Procedure \ref{Proc1} to the two samples $\chi_{i1}^{(l)}$ and $\chi_{i2}^{(l)}$ to obtain the sample PEM $\hat\gamma_{i,l}^{aligned}$ of the aligned sample ${\chi_{i1}^{(l)}}^{aligned}$. Compute the sample PEM $\hat\omega_{i,l}$ of the sample $\big\{\hat\eta_{i,l},\hat\gamma_{i,l}^{aligned}\big\}$.

%
 
 Apply the Sample Alignment Procedure \ref{Proc1} to the two single element samples $\hat\omega_{1,l}$ and $\hat\omega_{2,l}$ to obtain an aligned version $\hat\omega_{1,l}^{aligned}$ of $\hat\omega_{1,l}$ and compute $d_k=\delta\big(\hat\omega_{1,l}^{aligned},\hat\omega_{2,l}\big)$.
 
 Letting $r = \#\{ l \,\vert~ d_l > d_1\} $, reject the null hypothesis that there is $g\in \Shape$ such that $g.\gamma_0=\eta_0$ at significance level $\alpha$  if $r/\binom{N+M}{N} < \alpha$.
\end{Test}

\section{Simulations}\label{scn:Simulations}
\paragraph{GP models.}
The performance of our three tests for testing samples of GP models, with and without action of the group $\Inull$, is studied using the following Gaussian processes
\begin{align*}
 \varepsilon^{1}_{t} &= a_1\sin\left( \tfrac{\pi}{2}t \right) + a_2\cos\left( \tfrac{\pi}{2}t \right) \\ 
 \varepsilon^{2}_{t} &= (\sin(4\pi t) + 1.5)\frac{ \sum_{i=1}^{10} a_i \beta_i(x) }{\sqrt{ \sum_{i=1}^{10} \beta_i(x)}},~\beta_i(x)=e^{-\frac{\left(x-\frac{i-1}{9}\right)^2}{0.2}},~i=1,\ldots,10 
\end{align*}
and center curves $\gamma_{0}^{\lambda}(t)$, $\lambda\in \R$,
\begin{align*}
 \alpha_x^\lambda\big(t\big) &= 80t^2-80t+20 + \lambda\frac{e^{-\tfrac{1}{2}\big(\tfrac{t-0.5}{0.08}\big)^2}}{0.08\sqrt{2\pi}}-35		\\
 \alpha_y\big(t\big) &= 70t\sin\big(4\pi t^{0.7}\big) + 5			\\
 \alpha_z\big(t\big) &= 10\cos\big(13\pi \big)\,,
\end{align*}
where $\alpha_x^\lambda,\alpha_y, \alpha_z$ are Euler angles of $\gamma_{0}^{\lambda}(t)$ measured  in degrees. Then, five GP models are defined by
\begin{equation}\label{eq:PowerTestProcesses}
    \begin{aligned}
    \gamma^0_{A} &= \gamma_{0}^{0}(t)\Exp\Big(\iota\big((\varepsilon^{1}_{1,t},\varepsilon^{1}_{2,t},\varepsilon^{1}_{3,t})\big)\Big)		\\
    \gamma^\lambda_{B} &= \gamma_{0}^{\lambda}(t)\Exp\left(\iota\left(\begin{pmatrix}
		1 		& 0 		& 0 \\
		\frac{1}{2} 	& \frac{1}{2} 	& 0 \\
		\frac{1}{\sqrt{3}} 		& \frac{1}{\sqrt{3}} 		& \frac{1}{\sqrt{3}}
	  \end{pmatrix}(\varepsilon^{2}_{1,t},\varepsilon^{2}_{2,t},\varepsilon^{2}_{3,t})^T\right)\right) ~ ~ \text{ for } \lambda\in\{0.5,1,2,2.5\}\,.
    \end{aligned}
\end{equation}
where $\varepsilon^{j}_{i,t}$, $i=1,2,3$ and $j=1,2$, are independent realizations of $\varepsilon^{i}_{t}$.

\paragraph{Design of the simulation study.}
We consider $\gamma$ and $\eta$ each distributed according to one of the five GP models from \eqref{eq:PowerTestProcesses} from which we simulate samples $\gamma_1,...,\gamma_N$ and $\eta_1,...,\eta_N$, $N\in \{10,15,30\}$, on  $[0,1]$ with an equidistant time grid of width $\Delta t= 0.01$. We call these \emph{aligned samples} and we subject the resulting $N(N+1)/2$ pairings to Test \ref{test:ILLP}, cf. Table \ref{table:MILLP} (a). Further for fixed rotations $P\in SO(3)$ with Euler angles $\alpha_x=-0.5^\circ$, $\alpha_y=13^\circ$, $\alpha_z=-9^\circ$ and $Q\in SO(3)$ with Euler angles $\alpha_x=12^\circ$, $\alpha_y=0^\circ$, $\alpha_z=5^\circ$ we misalign each sample $\eta_1,...,\eta_N$ to obtain $\tilde\eta_1,...,\tilde\eta_N$ via $\tilde\eta_n = P \eta_n Q$ ($n=1,\ldots,N$). We call these the \emph{misaligned samples} and we subject the resulting $N(N+1)/2$ pairings to Tests \ref{test:ILLP}, \ref{test:PILLP} and \ref{test:MILLP}, cf. Table \ref{table:MILLP} (b), (c) and (d).
%
%
%

We use $M=2000$ simulations and for each  $M_{perm} = 5000$ permutations. We do not simulate time warping actions because, in contrast to spatial alignment which can be explicitly estimated via a SVD, running for every permutation for every simulation the dynamic program from Section \ref{scn:generic-loss-fcn} would require in our implementation 70.000 hours on an Intel(R) Core(TM) i5-4300U CPU with 1.90GHz.

\paragraph{Results of the simulation study.}
In Table \ref{table:MILLP} we report the corresponding acceptance rates. Subtracting the diagonal elements from one gives the type I errors, the off-diagonals give type II errors.  For aligned samples, Test \ref{test:ILLP} keeps the level (Table \ref{table:MILLP} (a)) and so does Test \ref{test:MILLP} for misaligned data (Table \ref{table:MILLP} (d)). For misaligned data, Test \ref{test:ILLP} completely fails to identify  equality up to the spacial action (Table \ref{table:MILLP} (b)) while Test  \ref{test:PILLP} (Table \ref{table:MILLP} (c)) masters identification but at a  lower level, at the price of a lower power. We have also applied  Test \ref{test:MILLP} to aligned data which gives a table similar to  Table \ref{table:MILLP} (d).

\begin{table}[!ht]\tiny
    \begin{subtable}[c]{0.5\textwidth}\centering
    \begin{tabular}{| c | r | r | r | r | r |}
		      \hline
		      \multicolumn{1}{|r|}{\tiny \textbf{10}}	&  				& 				&  				&             			& \\
		      \tiny \textbf{N=15}			& $\gamma^0_{A}$ 		& $\gamma^{0.5}_{B}$ 		& $\gamma^{1}_{B}$ 		& $\gamma^{2}_{B}$            	& $\gamma^{2.5}_{B}$\\
		      \multicolumn{1}{|r|}{\tiny \textbf{30}}	&  				& 				&  				&             			& \\		  
		      \hline\hline                       
		      $\gamma^0_{A}$  		&  \makecell{95.8\\94.5\\94.5}	& \makecell{80.7\\72.8\\45.1}	&  \makecell{29.8\\8.6\\0.0}	& \makecell{0.0\\0.0\\0.0}	& \makecell{0.0\\0.0\\0.0} \\\hline
		      $\gamma^{0.5}_{B}$  	&  \cellcolor{gray}		& \makecell{95.4\\94.6\\94.8}	&  \makecell{88.1\\81.6\\63.0}	& \makecell{8.4\\0.2\\0.0}	& \makecell{0.1\\0.0\\0.0}\\\hline
		      $\gamma^{1}_{B}$  	&  \cellcolor{gray}		& \cellcolor{gray}		&  \makecell{94.2\\95.3\\93.65}	& \makecell{49.7\\26.4\\0.3}	& \makecell{7.1\\0.3\\0.0}\\\hline
		      $\gamma^{2}_{B}$  	&  \cellcolor{gray}		& \cellcolor{gray}		&  \cellcolor{gray}		& \makecell{95.3\\95.3\\95.4}	& \makecell{87.9\\81.2\\61.8}\\\hline
		      $\gamma^{2.5}_{B}$  	&  \cellcolor{gray}		& \cellcolor{gray}		&  \cellcolor{gray}		& \cellcolor{gray}		& \makecell{95.3\\95.2\\95.4}\\
		      \hline  
    \end{tabular}
    \caption{Test \ref{test:ILLP} on aligned samples.}
    \end{subtable}
    \begin{subtable}[c]{0.5\textwidth}\centering
    \begin{tabular}{| c | r | r | r | r | r |}
		      \hline
		      \multicolumn{1}{|r|}{\tiny \textbf{10}}	&  				& 				&  				&             			& \\
		      \tiny \textbf{N=15}			& $\gamma^0_{A}$ 		& $\gamma^{0.5}_{B}$ 		& $\gamma^{1}_{B}$ 		& $\gamma^{2}_{B}$            	& $\gamma^{2.5}_{B}$\\
		      \multicolumn{1}{|r|}{\tiny \textbf{30}}	&  				& 				&  				&             			& \\		  
		      \hline\hline                       
		      $\gamma^0_{A}$  		&  \makecell{0.0\\0.0\\0.0}	& \makecell{0.0\\0.0\\0.0}	&  \makecell{0.0\\0.0\\0.0}	& \makecell{0.0\\0.0\\0.0}	& \makecell{0.0\\0.0\\0.0} \\\hline
		      $\gamma^{0.5}_{B}$  	&  \cellcolor{gray}		& \makecell{0.0\\0.0\\0.0}	&  \makecell{0.0\\0.0\\0.0}	& \makecell{0.0\\0.0\\0.0}	& \makecell{0.0\\0.0\\0.0}\\\hline
		      $\gamma^{1}_{B}$  	&  \cellcolor{gray}		& \cellcolor{gray}		&  \makecell{0.0\\0.0\\0.0}	& \makecell{0.0\\0.0\\0.0}	& \makecell{0.0\\0.0\\0.0}\\\hline
		      $\gamma^{2}_{B}$  	&  \cellcolor{gray}		& \cellcolor{gray}		&  \cellcolor{gray}		& \makecell{0.0\\0.0\\0.0}	& \makecell{0.0\\0.0\\0.0}\\\hline
		      $\gamma^{2.5}_{B}$  	&  \cellcolor{gray}		& \cellcolor{gray}		&  \cellcolor{gray}		& \cellcolor{gray}		& \makecell{0.0\\0.0\\0.0}\\
		      \hline  
    \end{tabular}
    \caption{Test \ref{test:ILLP} on misaligned samples.}
    \end{subtable}
    
    \vspace*{0.5cm}
%
    \begin{subtable}[c]{0.5\textwidth}\centering
    \begin{tabular}{| c | r | r | r | r | r |}
		      \hline
		      \multicolumn{1}{|r|}{\tiny \textbf{10}}	&  				& 				&  				&             			& \\
		      \tiny \textbf{N=15}			& $\gamma^0_{A}$ 		& $\gamma^{0.5}_{B}$ 		& $\gamma^{1}_{B}$ 		& $\gamma^{2}_{B}$            	& $\gamma^{2.5}_{B}$\\
		      \multicolumn{1}{|r|}{\tiny \textbf{30}}	&  				& 				&  				&             			& \\		  
		      \hline\hline                       
		      $\gamma^0_{A}$  		&  \makecell{99.9\\99.9\\99.9}	& \makecell{93.6\\90.2\\73.1}	&  \makecell{50.1\\16.5\\0.0}	& \makecell{0.0\\0.0\\0.0}	& \makecell{0.0\\0.0\\0.0} \\\hline
		      $\gamma^{0.5}_{B}$  	&  \cellcolor{gray}		& \makecell{96.1\\95.9\\96.5}	&  \makecell{93.1\\88.4\\75.9}	& \makecell{11.5\\0.8\\0.0}	& \makecell{0.2\\0.0\\0.0}\\\hline
		      $\gamma^{1}_{B}$  	&  \cellcolor{gray}		& \cellcolor{gray}		&  \makecell{95.8\\96.8\\96.1}	& \makecell{63.4\\39.0\\0.8}	& \makecell{12.4\\0.3\\0.0}\\\hline
		      $\gamma^{2}_{B}$  	&  \cellcolor{gray}		& \cellcolor{gray}		&  \cellcolor{gray}		& \makecell{96.4\\96.6\\96.3}	& \makecell{91.7\\89.1\\75.1}\\\hline
		      $\gamma^{2.5}_{B}$  	&  \cellcolor{gray}		& \cellcolor{gray}		&  \cellcolor{gray}		& \cellcolor{gray}		& \makecell{96.7\\96.4\\97.1}\\
		      \hline  
    \end{tabular}
    \caption{Test \ref{test:PILLP} on misaligned samples.}
    \end{subtable}
    \begin{subtable}[c]{0.5\textwidth}\centering
    \begin{tabular}{| c | r | r | r | r | r |}
		      \hline
		      \multicolumn{1}{|r|}{\tiny \textbf{10}}	&  				& 				&  				&             			& \\
		      \tiny \textbf{N=15}			& $\gamma^0_{A}$ 		& $\gamma^{0.5}_{B}$ 		& $\gamma^{1}_{B}$ 		& $\gamma^{2}_{B}$            	& $\gamma^{2.5}_{B}$\\
		      \multicolumn{1}{|r|}{\tiny \textbf{30}}	&  				& 				&  				&             			& \\		  
		      \hline\hline                       
		      $\gamma^0_{A}$  		&  \makecell{94.9\\94.9\\95.4}	& \makecell{81.8\\74.0\\40.0}	&  \makecell{24.1\\2.8\\0.0}	& \makecell{0.0\\0.0\\0.0}	& \makecell{0.0\\0.0\\0.0} \\\hline
		      $\gamma^{0.5}_{B}$  	&  \cellcolor{gray}		& \makecell{95.3\\95.2\\94.3}	&  \makecell{89.3\\86.1\\70.5}	& \makecell{9.4\\0.2\\0.0}	& \makecell{0.0\\0.0\\0.0}\\\hline
		      $\gamma^{1}_{B}$  	&  \cellcolor{gray}		& \cellcolor{gray}		&  \makecell{95.1\\95.3\\95.8}	& \makecell{59.7\\30.2\\0.4}	& \makecell{8.0\\0.2\\0.0}\\\hline
		      $\gamma^{2}_{B}$  	&  \cellcolor{gray}		& \cellcolor{gray}		&  \cellcolor{gray}		& \makecell{95.2\\94.9\\95.0}	& \makecell{89.1\\85.4\\67.8}\\\hline
		      $\gamma^{2.5}_{B}$  	&  \cellcolor{gray}		& \cellcolor{gray}		&  \cellcolor{gray}		& \cellcolor{gray}		& \makecell{95.7\\94.8\\95.0}\\
		      \hline  
    \end{tabular}
        \caption{Test \ref{test:MILLP} on misaligned samples.}
    \end{subtable}
  \caption{\it Average acceptances rate in percent over $M=2000$ simulations, testing  (\ref{eq:NullHypoRGP}).}
   \label{table:MILLP}
\end{table}

\section{Applications To Gait Data}\label{scn:Application}

In this application we study biomechanical gait data collected for an experiment performed in the School of Rehabilitation Science (McMaster University, Canada). 
In order to identify individual gait patterns in a common practice clinical setting, subjects walked along a 10 m pre-defined straight path at comfortable speed. Retro-reflective markers were placed by an experienced technician onto identifiable skin locations on the upper and lower leg, following a standard protocol. 
Eight cameras recorded the position of these markers and from these spatial marker motions, a moving orthogonal frame $E_u(t)\in SO(3)$ describing the rotation of the upper leg w.r.t. the laboratory's fixed coordinate system was determined, and one for the lower leg, $E_l(t) \in SO(3)$, each of which was aligned near $I_{3\times 3}$ when the subject stood straight. 

Eight subjects  repeatedly walked along the straight path and for each walk a single gait cycle $\gamma(t) = E_u(t) E_l(t)^T$ representing the motion of the upper leg w.r.t. the lower leg was recorded. Due to Newton's law, these curves are continuously differentiable, i.e., $\gamma \in \X$. By design each subject started walking about three gait cycles prior to data collection allowing the assumption of independence of recorded gait cycles. Two samples ("before" and "after" marker replacement) of  $N\approx 12$ walks were available for analysis. 

Replacing markers between sessions results then in fixed and different rotations of the upper and lower legs conveyed by suitable $P,Q \in SO(3)$ such that
$E'_u(t) = P E_u(t)$ and $E'_l(t) = Q^T E_l(t)$. In consequence of Lemma \ref{lem:isometry-group}, marker replacement corresponds to the action of $\Inull$ on gait curves in $\X$, 
\begin{align*} \gamma'(t) &= E'_u(t) E'_l(t)^T = P \gamma(t)Q\,.\end{align*}
Although, $P$ and $Q$ are near $I_{3\times 3}$, it has been known (see \cite{noehren2010improving} and \cite{groen2012sensitivity}) that this \emph{marker replacement effect} makes identification of individual gait patterns challenging.
This is illustrated in Figure \ref{fig:marker-replacement-effect}a: While the effect is small on the dominating Euler flexion-extension angle, it can amount to a considerable distortion of other angles; here the Euler internal-external angle is mainly shifted.

Indeed, applying Test \ref{test:ILLP} to the gait samples before and after marker replacement shows that 6 out of 8 subjects cannot be identified with significance level $\alpha =5\,\%$, cf. Table \ref{table:SSAnecessary}a. If, however, the group action due to $\Inull$, which precisely models the marker replacement effect, is taken into account in addition with sample-wise time warping, by Test \ref{test:MILLP}, then all 8 subjects can be reliably identified and well discriminated from other subjects, cf. Table \ref{table:SSAnecessary}b. This preliminary study warrants larger investigations to develop a protocol allowing for inferential gait analysis robustly modeling marker replacement and temporal warping effects.

\begin{table}[t!]\scriptsize\centering
\begin{subtable}{.45\textwidth}
\setlength{\tabcolsep}{1mm}
\caption{\textit{Test 4.1 without any spatial and \newline temporal alignment.}}
	    \begin{tabular}{|c|cccccccc|}
	      \hline
		\textbf{Vol}	& $1$ & $2$ & $3$ & $4$ & $5$ & $6$ & $7$ & $7$  \\ 
	      \hline
	      $1$ 	& {$0.6$} & 0 & 0 & 0 & 0 & 0 & 0 & 0 \\ 
	      $2$  	& 0 & {$0.0$} & 0 & 0 & 0 & 0 & 0 & 0 \\ 
	      $3$ 	& 0 & 0 & {$2.7$} & 0 & 0 & 0 & 0 & 0 \\  
	      $4$ 	& 0 & 0 & 0 & {$0.1$} & 0 & 0 & 0 & 0 \\
	      $5$ 	& 0 & 0 & 0 & 0 & { $0.0$} & 0 & 0 & 0 \\ 
	      $6$ 	& 0 & 0 & 0 & 0 & 0 & $6.5$ & 0 & 0 \\
	      $7$ 	& 0 & 0 & 0 & 0 & 0 & 0 & $6.8$ & 0 \\ 
	      $8$ 	& 0 & 0 & 0 & 0 & 0 & 0 & 0 & {$1.6$} \\
	      \hline
	     \end{tabular}
\end{subtable}%
\begin{subtable}{.45\textwidth}
\setlength{\tabcolsep}{1mm}
\caption{{\it Test 4.4 including spatial and \newline temporal alignment.}}
	    \begin{tabular}{|c|cccccccc|}
	      \hline
		\textbf{Vol}	& $1$ & $2$ & $3$ & $4$ & $5$ & $6$ & $7$ & $7$  \\ 
	      \hline
	      $1$ 	& $23.6$ & 0 & 0 & 0 & 0 & 0 & 0 & 0 \\ 
	      $2$  	& 0 & $39.2$ & 0 & 0 & 0 & 0 & 0 & 0 \\ 
	      $3$ 	& 0 & 0 & $66.3$ & 0 & 0 & 0 & 0 & 0 \\  
	      $4$ 	& 0 & 0 & 0 & $53.8$ & 0 & 0 & 0 & 0 \\
	      $5$ 	& 0 & 0 & 0 & 0 & $6.2$ & 0 & 0 & 0 \\ 
	      $6$ 	& 0 & 0 & 0 & 0 & 0 & $53.7$ & 0 & 0 \\
	      $7$ 	& 0 & 0 & 0 & 0 & 0 & 0 & $97.2$ & 0 \\ 
	      $8$ 	& 0 & 0 & 0 & 0 & 0 & 0 & 0 & $60.7$ \\
	      \hline
	     \end{tabular}
\end{subtable}
\caption{\textit{Displaying $p$-values in percentage of permutation tests of equality of means before and after marker replacement for different volunteers.}}
\label{table:SSAnecessary}
\end{table}

\section*{Acknowledgment}

The first and the last author gratefully acknowledge DFG HU 1575/4 and the Niedersachsen Vorab of the Volkswagen Foundation.

%
%

\FloatBarrier
\appendix

\section{Appendix: Proofs}

For convenience, here we will always use $I=[0,1]$.

\paragraph{Proof of Lemma \ref{lem:isometry-group}.}
    Using \cite[Theorem 4.1(i) on p. 207]{Helgason1962}, we have to assert that $\cG := \G\times \G$ and $\cK:=\diag(\G\times \G)$ form a \emph{Riemannian symmetric pair} where  $\cG$ is semisimple  and acts effectively on $\cG/\cK = \{[g,h]: g,h \in \G\}$, $[g,h] = \{(gk,hk): k\in \G\}$. Here the action is given by  $(g',h'):[g,h] \mapsto  [g'g,h'h]$. The fact that $(\cG,\cK)$ is a Riemannian symmetric pair is asserted in \cite[p. 207]{Helgason1962}, $\G=SO(3)$ is simple, hence semisimple and the effective action follows from the fact $\cG$ has no trivial normal divisors $N \subset \cK$, cf.  \cite[p. 110]{Helgason1962}. For if $\{(e,e)\}\neq N\subset \cK$ would be a normal divisor of $\cG$ then there would be a subgroup $\{e\}\neq H$ of $\G$ with the property that for every $h\in H, g,k\in \G$, in particular for $g\neq k$, there would be $h'\in H$ such that
    $$ (g,k) (h,h) (g^{-1},k^{-1}) = (h',h')\,,$$
    i.e. $k^{-1}g$ would be in the center of $\G$, which, however, is trivial because $\G$ is non- commutative and simple. Hence $g=k$, a contradiction.
\qed

\paragraph{Proof of Theorem \ref{thm:FullAsymptoticEquivalence}.}
	In the following we set  $\|C_t\|_\infty = \max_{t\in I}\big\Vert C_t\big\Vert$, for short.
	
	First observe that (\ref{iota-rotation:eq}) and (\ref{exp-commutator:eq}) imply
	\begin{equation*}
	    \delta_0(t)^T\Exp\Big(\iota \circ C_t\Big)\, \delta_0(t)\,\Exp\Big(\iota \circ D_t\Big)
			  = \Exp\Bigg(\iota \circ \!\left(\delta_0(t)^TC_t\right)\Bigg)\,\Exp\Big(\iota \circ D_t\Big)\,.
	\end{equation*}
	Thus, since $\big\Vert\delta_0(t)^TC_t\big\Vert_\infty = \big\Vert C_t\big\Vert_\infty = \mathcal{O}_p(\sigma)$, we may w.l.o.g. set $\delta_0(t)=I_{3\times3}$ for all $t\in I$.
       
%
%
%
%
	Next, define the group and vector valued valued processes
	\begin{equation*}
	      h_t = \Exp\!\left(\iota\circ C_t\right)\Exp\!\left(\iota\circ D_t\right)\,,
	\mbox{ and }
	  H_t = \iota^{-1}\circ \Exp\vert_\mathfrak{V}^{-1} h_t\,,
	\end{equation*}
	where $\mathcal{B}_\pi(0)\subset \mathfrak{V} \subset\R^3$ is a set making the Lie exponential restricted to $\mathfrak{V}$ bijective onto $G$.  In consequence, setting $A_t = C_t + D_t$ and 
	\begin{equation*}
	      \tilde A_t =  H_t - A_t\,,
	\end{equation*}
	we have 
	\begin{equation*}
	      \Exp\!\left(\iota\circ A_t + \iota\circ\tilde A_t\right) =  \Exp\!\left(\iota \circ C_t\right)\,\Exp\!\left(\iota \circ D_t\right)\,,
	\end{equation*}
	and in order to prove the theorem, it suffices to show that for all $\epsilon > 0$ there is $M=M_\epsilon>0$ such that
	\begin{equation}\label{eq:goal}
	    \Prb\!\left\{ \sigma^{-2} \big\Vert \tilde A_t\big\Vert_\infty > M\right\} < \epsilon\,.
	\end{equation}

	To this end, fix $\epsilon > 0$ and consider the function
	\begin{equation*}
	  f:\so(3)\times \so(3)\to \so(3),~(X,Y) \mapsto\Log\big(\Exp(X)\,\Exp(Y)\big)
	\end{equation*}
	which is well-defined and analytic in a neighborhood $U$ of $(0,0) \in \so(3)\times \so(3)$. In particular, there is a compact subset $K\subset U$ which contains a neighborhood of $(0,0)$ such that the Taylor expansion
	\begin{equation}\label{eq:3proof-thm-equivalence-FOGPM} 
	  f(X,Y) = X + Y + \mathcal{O}\big(\Vert X \Vert^2 +\Vert Y \Vert^2\big)
	\end{equation}
	is valid for all $(X,Y)\in K$.

	We now split the l.h.s. of \eqref{eq:goal} into the two summands
	\begin{equation*}
	\Prb\!\left\{ \mathfrak{K};~ \sigma^{-2}\big\Vert  \iota^{-1}\circ\Log\big(h_t\big) - C_t - D_t \big\Vert_\infty > M \right\}
	\mbox{ and }
	\Prb\!\left\{ \mathfrak{K}^C;~ \sigma^{-2} \big\Vert  H_t - C_t - D_t \big\Vert_\infty > M \right\}\,,
	\end{equation*}
	where $\mathfrak{K}= \{ \omega\in \Omega\,\vert~\forall t \in I:~ \big(\iota\circ C_t,\iota\circ D_t\big) \in K\}$. Note that we used in the first summand that $\iota \circ\Log\big(h_t\big)=H_t$ for all $t\in I$, if the complete sample path $t\mapsto(\iota\circ C_t,\iota\circ D_t)$ is contained in $K$.

	For the first summand we have $\omega \in \mathfrak{K}$ and Taylor's formula \eqref{eq:3proof-thm-equivalence-FOGPM} is applicable, yielding
	\begin{align*}
	  \big\Vert \tilde A_t \big\Vert_\infty \leq \Lambda \max_{t\in I}\!\left(\big\Vert C_t \big\Vert^2 +\big\Vert D_t \big\Vert^2\right)
				  \leq \Lambda \Vert C_t \Vert_\infty^2 + \Lambda \Vert D_t \Vert_\infty^2\,, 
	\end{align*}
	where $\Lambda>0$ can be chosen independent of $t\in I$, since the Hessian of $f$ is bounded on $K$.
	Thus, $\max_{t\in I}\big\Vert \tilde A_t \big\Vert = \mathcal{O}_p\big(\sigma^2\big)$ on $\mathfrak{K}$ and therefore we find $M>0$ such that 
	\begin{equation}\label{eq:4proof-thm-equivalence-FOGPM}
	\Prb\!\left\{ \mathfrak{K};~ \sigma^{-2}  \big\Vert \iota^{-1}\circ \Log\big(h_t\big) - C_t - D_t \big\Vert_\infty \geq M \right\} < \frac{\epsilon}{2}\,,
	\end{equation}
	
	The second summand is characterized by the fact that there is  $t\in I$ with $\big(\iota\circ C_t, \iota\circ D_t\big)\notin K$. Since $\|C_t\|_\infty, \|D_t\|_\infty\xrightarrow \Prb 0$ , it follows that
	$\Prb\!\left( \mathfrak{K}^C  \right) \rightarrow 0$ as $\sigma \to 0$. Hence there is $\sigma_0>0$ such that for all $\sigma<\sigma_0$,
	\begin{equation*}
	\Prb\Big( \mathfrak{K}^C \Big) < \frac{\epsilon}{2}\,.
	\end{equation*}
	In consequence, for any $M>0$ and all $\sigma < \sigma_0$,
	\begin{align*}
	\Prb\!\left\{ \mathfrak{K}^C;~ \sigma^{-2}  \big\Vert H_t - C_t - D_t \big\Vert_\infty > M \right\} \leq
	\Prb\!\left\{ \big(\iota\circ C_t, \iota\circ D_t\big) \in K^C \mbox{ for some }t\in I\right\} < \frac{\epsilon}{2}\,.
	\end{align*}
	
	In conjunction with \eqref{eq:4proof-thm-equivalence-FOGPM}, this yields \eqref{eq:goal} completing the proof.
\qed

\paragraph{Proof of Theorem \ref{theorem:ComputingPEM}.}
(i) follows at once from the arguments preceding the theorem. To see (ii) we note that the sum of $\cC^1$ functions is again $\cC^1$ and the derivative of the expected value is the expected value of derivatives in case of dominated convergence of the derivative. 
Now consider
   $\Gamma = \{\E[\gamma(t)]: t\in [0,1]\}$, $\Gamma_N = \{\bar\gamma_N(t)]: t\in [0,1]\}$ and ${\mathcal F} =\big\{B\in \R^{3\times 3}: \rank(B)\leq 1\big\}$.
Because $\Gamma$ and $\Gamma_N$ are compact and ${\mathcal F}$ is closed, in case of uniqueness there  are open sets $U \supset \Gamma$, $U_N\supset \Gamma_N$ such that $U\cap {\mathcal F} = \emptyset=U_N\cap {\mathcal F}$. By \citet[Theorem  4.1, p. 6]{Dudek1994}, $\mathfrak{pr}|_U$ and $\mathfrak{pr}|_{U_N}$ are even analytic, yielding that the composition is $\cC^1$ as claimed.
\qed

\begin{Lem}\label{lemma:ExpSymmetry} Suppose that the distribution of a random variable $X\in \R^3$ with existing first moment $\E[X]$ is even, i.e., that $\Prb\{X\in M\} = \Prb\{-X\in M\}$ for all Borel sets $M\subset \R^3$ and $\E\left[ \cos\Vert X \Vert\right]>0$. Then
      $ \E\big[\Exp(\iota \circ X)\big]$ is symmetric positive definite.
\end{Lem}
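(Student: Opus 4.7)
}
The plan is to apply the Rodriguez formula \eqref{eq:rodriguez} with $A = \iota(X)$. Writing
$$\Exp(\iota(X)) = I_{3\times 3} + \frac{\sin\Vert X\Vert}{\Vert X\Vert}\,\iota(X) + \frac{1-\cos\Vert X\Vert}{\Vert X\Vert^2}\,\iota(X)^2\,,$$
each of the three summands is bounded in operator norm by a constant ($1$, $|\sin\Vert X\Vert|$ and $|1-\cos\Vert X\Vert|$ respectively), so the expectation exists entrywise and distributes over the sum.

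The first step is to show that the skew-symmetric middle summand integrates to zero. Since $\iota(-X) = -\iota(X)$ is odd in $X$ while $\sin\Vert X\Vert/\Vert X\Vert$ is even in $X$, the integrand is overall odd, and the evenness of the distribution of $X$ forces its expectation to vanish. This is the only place where the symmetry hypothesis is used, and it is what reduces the problem to an expression that is manifestly symmetric.

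Next I would compute $\iota(X)^2$ explicitly. Using $\iota(X)a = X\times a$ for $a\in\R^3$ together with the identity $X\times(X\times a) = (X\cdot a)X - \Vert X\Vert^2 a$, one obtains $\iota(X)^2 = XX^T - \Vert X\Vert^2\, I_{3\times 3}$. Substituting this back and rearranging via $1 - \E[1-\cos\Vert X\Vert] = \E[\cos\Vert X\Vert]$ yields
$$\E\big[\Exp(\iota(X))\big] = \E[\cos\Vert X\Vert]\,I_{3\times 3} + \E\!\left[\frac{1-\cos\Vert X\Vert}{\Vert X\Vert^2}\,XX^T\right]\,.$$
Both summands are symmetric, hence so is $\E[\Exp(\iota(X))]$.

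For positive definiteness the first summand is a strictly positive multiple of the identity by the hypothesis $\E[\cos\Vert X\Vert]>0$, hence positive definite, while the second is the expectation of positive semi-definite matrices $\frac{1-\cos\Vert X\Vert}{\Vert X\Vert^2}XX^T$ (note $1-\cos\Vert X\Vert \geq 0$), hence positive semi-definite. Their sum is therefore positive definite, completing the proof. I do not expect any real obstacle here: once the odd term has been identified and discarded, the rest is routine linear algebra.
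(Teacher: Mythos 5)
Your proposal is correct and follows essentially the same route as the paper's proof: Rodriguez formula, the identity $\iota(X)^2 = XX^T - \Vert X\Vert^2 I_{3\times 3}$, vanishing of the odd skew-symmetric term by evenness, and positivity from $\E[\cos\Vert X\Vert]>0$ plus positive semi-definiteness of the remaining term (the paper just phrases this last step via the quadratic form $V^T\E[\Exp(\iota\circ X)]V$). No gaps.
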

\begin{proof} 
      With the Rodriguez formula \eqref{eq:rodriguez} we have
      \begin{align*}
	\Exp(\iota\circ X) &= I_{3\times 3} + \iota\circ X\, \sinc{ \Vert X \Vert } +( \iota\circ X)^2\,\frac{1-\cos \Vert X\Vert}{\Vert X\Vert ^2} \\
		           &= \cos(\Vert X\Vert)I_{3\times 3} +\iota\circ X\,  \sinc{ \Vert X \Vert }  + (1-\cos(\Vert X\Vert)) \frac{XX^T}{\Vert X \Vert^2}\,,
      \end{align*}
       where the second equality is due to
    \begin{equation*}
    \left( \iota\left(\frac{X}{\Vert X \Vert}\right) \right)^2 = \frac{XX^T}{\Vert X \Vert^2} - I_{3\times 3}\,.
    \end{equation*}
     By hypothesis, $X$ is even, hence $\E[\iota\circ X\, \sinc{ \Vert X \Vert }]=\iota \circ \E[ X\, \sinc{ \Vert X \Vert }]=0$ which yields that
     \begin{equation*} 
      \E[\Exp(\iota\circ X)] = \E[\cos\Vert X\Vert] I_{3\times 3} + \E\left[(1-\cos\Vert X\Vert) \frac{XX^T}{\Vert X \Vert^2}\right]
     \end{equation*}
      is symmetric. Let $V \in \R^3$ be arbitrary with $\Vert V\Vert =1$, then positive definiteness follows from the inequality
      \begin{equation*}
	V^T \E\left[ \Exp(\iota\circ X) \right] V = \E[\cos\Vert X\Vert ] + \E\left[(1-\cos\Vert X\Vert ) \frac{(V^TX)^2}{\Vert X\Vert ^2}\right] \geq \E[\cos\Vert X\Vert ]>0\,.
      \end{equation*}
\end{proof}
\paragraph{Proof of Theorem \ref{theorem:PEMisCenterCurve}.}
    Fixing $t\in I$ and letting $\E[\gamma(t)] = UDV^T$ be a SVD, with Theorem \ref{theorem:ComputingPEM}, we need to assert $\gamma_0(t) = U S V^T$.
        Since $\E[\gamma(t)] = \gamma_0(t) \E\!\left[\Exp\big(\iota \circ A_t\big)\right]$, this holds if $\E\!\left[\Exp\big(\iota\circ A_t\big)\right]$ is symmetric and positive definite.
    
    Indeed, let $\E\!\left[\Exp\big(\iota\circ A_t\big)\right]=\tilde V \Lambda \tilde V^T$ with $\tilde V \in  SO(3)$ and diagonal $\Lambda>0$. Then $\Lambda=D$ (assuming that the singular values are sorted non increasingly), 
    giving 
    \begin{equation*}
    UDV^T=\gamma_0(t) \tilde V D \tilde V^T\,.
    \end{equation*}
    Since the eigenspaces corresponding to same eigenvalues spanned by the columns of $\tilde V$ and $V$ agree, we can choose $\tilde V=V$, yielding $\gamma_0 = UV^T$ with $U\in SO(3)$, since $U=\gamma_0(t) \tilde V\in SO(3)$, which proves the assertion, even with $S= I_{3\times 3}$.

    With Lemma \ref{lemma:ExpSymmetry} it remains to prove that $A_t=X \sim\Norm(0,\Sigma)$ in $\R^3$ fulfills $\E[\cos\Vert X\Vert ] > 0$. Indeed, making use of the Fourier transformation of Gaussian densities in the 3rd equality below, we have
    \begin{align*}
    \E[\cos\Vert X\Vert ] 
			    &= \frac{1}{(2\pi)^{k/2}\sqrt{\nu_1\cdot...\cdot\nu_k}} \int_{\R^k} e^{-y^T\,\widetilde{\nu}^{-1}y/2} \cos(\Vert y\Vert )\,dy\\
			    &= \frac{1}{(2\pi)^{k/2}}\int_{S^{k-1}}\left(\int_0^\infty e^{-r^2/2} \cos \left(r \sqrt{\phi^T\widetilde{\nu} \phi}\right)\,dr\right)\,d\sigma(\phi)\\
			    &= \frac{1}{(2\pi)^{k/2-1}}\int_{S^{k-1}} \frac{1}{2}\,e^{-\phi^T\,\widetilde{\nu}\phi/2}\, d\sigma(\phi)~>~0\,,
    \end{align*}
    with the spherical volume element $d\sigma(\phi)$ on the $k-1$ dimensional unit sphere $S^{k-1}$. Here we have used a svd $\Sigma = W \,\diag(\nu_1,\nu_2,\nu_3)W^T$ with $W=(w_1,w_2,w_3)\in SO(3)$, the smallest index $k\in \{1,2,3\}$ such that $\nu_k>0$, $\widetilde{\nu} =\diag(\nu_1,...,\nu_k)$  and $y=(w_1,...,w_k)^Tx \in \R^k$.
\qed

\paragraph{Proof of Corollary \ref{corollary:PESMconsistency}.}
  That the extrinsic sample mean set is a strongly consistent estimator of the extrinsic population mean set follows from a more general result by \citet{Ziezold77}. In case of uniqueness, guaranteed by the model we have that the center curve and PEM agree by virtue of Theorem \ref{theorem:PEMisCenterCurve}. Hence, every measurable selection of the sample mean converges almost surely to the unique population mean  yielding the assertion (see also \citet{Bhattacharya2003}).
\qed

\paragraph{Proof of Theorem \ref{theorem:PESMuniformconvergence}.} First, we claim that in order to prove (i) it suffices to show that
    \begin{equation}\label{equation:uniformconvergence}
    \max_{t\in I} \Vert \bar\gamma_N(t) - \E\left[\gamma(t)\right]\Vert \rightarrow 0 ~~~ a.s.
    \end{equation}
    Indeed, by Theorem \ref{theorem:PEMisCenterCurve} the PEM of $\gamma$ is unique and in conjunction with the proof of Theorem \ref{theorem:ComputingPEM} and the notation there, there is $\varepsilon_0>0$ such that
    \begin{equation}\label{proof-theorem:PESMuniformconvergence:eq1}
      \big\Vert \E\left[\gamma(t)\right] - \mathcal{F} \big\Vert \geq \varepsilon_0\,,\mbox{for all $t\in I$.  }
    \end{equation}
    Now, if \eqref{equation:uniformconvergence} is true, there is a measurable set $\tilde\Omega$ with $\Prb\big(\tilde\Omega\big)=1$ such that for every $\omega \in \tilde\Omega$ there is  $N_\omega(\varepsilon_0)\in \N$ such that for all $N\geq N_\omega(\varepsilon_0)$ we have that
    \begin{equation*}
      \big\Vert \bar\gamma_N(\omega,t) - \E\left[\gamma(t)\right]\big\Vert < \frac{\varepsilon_0}{2}
    \end{equation*}
    for all $t\in I$. This implies that $\bar\gamma_N(t)\notin\mathcal{F}$ for all $t\in I$, since
    \begin{equation*}
      \big\Vert \bar\gamma_N(t) - \mathcal{F} \big\Vert \geq \big\Vert \E\left[\gamma(t)\right] - \mathcal{F} \big\Vert - \big\Vert \bar\gamma_N(t) - \E\left[\gamma(t)\right]\big\Vert > \frac{\varepsilon_0}{2}
    \end{equation*}
    for all $t\in I$ and all $N>N_\omega(\varepsilon_0)$. Therefore, $\hat E_N(t)$ is a unique point $\hat \mu_N(t)$ for all $N>N_\omega(\varepsilon_0)$. Arguing as in the proof of (ii) of Theorem \ref{theorem:ComputingPEM}, we conclude $\hat \mu_N \in\X$.
    
    Thus, it remains to prove the uniform convergence \eqref{equation:uniformconvergence}. By \citet[Theorem 21.8]{Davidson1994} it suffices to show that the sequence of processes $\bar \gamma_N(t)$ is stochastically equicontinuous, since we already have pointwise convergence by Corollary \ref{corollary:PESMconsistency}.
    
    In order to establish this, consider $\gamma_n(t) = \gamma_0(t)\Exp(\iota\circ A^n_t)$ with  $A_t^n\sim A_t$ for $n=1,\ldots,N$. 
    Define $\Omega'$ with $\Prb(\Omega') = 1$ by $\Omega'= \bigcap_{n\in\{1,...,N\}} \Omega_n$, where $\Omega_n$ is the set for which $A^n(\omega)\in \mathcal{C}^1(I,\R^3)$. 
    Using that  $\gamma_0$ and $A^n(\omega)$ are Lipschitz continuous with Lipschitz constants $\max_{t\in I} \Vert \gamma_0'(t) \Vert $ and  $ L_{A^n}(\omega)=\max_{t\in I} \Vert \partial_t A^{n}_t(\omega) \Vert$, respectively, we obtain for all $\omega\in \Omega'$
    \begin{align*}
    \big\Vert \gamma_0(s)\Exp\big(\iota\circ &A^n_s(\omega)\big) - \gamma_0(t)\Exp\big(\iota\circ A^n_t(\omega)\big)\big\Vert \\
    &\hspace*{-1.2cm}\leq \big\Vert\gamma_0(s)- \gamma_0(t)\big\Vert \big\Vert \Exp\big(\iota\circ A^n_s(\omega)\big) \big\Vert 
    + \big\Vert\gamma_0(t) \big\Vert \big\Vert \Exp\big(\iota\circ A^n_s(\omega)\big) - \Exp\big(\iota\circ A^n_t(\omega)\big) \big\Vert \\
    &\hspace*{-1.2cm}\leq M\big( 1  + L_{A^n}(\omega)\big) \vert s - t \vert\,,
    \end{align*}
    with $M>0$ sufficiently large because  $\Exp$ 
    is Lipschitz continuous as well.
     In consequence, the triangle inequality then yields
    \begin{align} \nonumber 
	\Vert \bar\gamma_N(\omega,s) - \bar\gamma_N(\omega,t)\Vert &\leq \frac{1}{N}\sum_{n=1}^N \big\Vert \gamma_0(s)\Exp\big(\iota\circ A^n_s(\omega)\big) - \gamma_0(t)\Exp\big(\iota\circ A^n_t(\omega)\big)\big\Vert\\
	\label{eq:LipschitzBoundMean}
      &\leq  M\left(1  + \frac{1}{N}\sum_{n=1}^N L_{A^n}(\omega)\right) \vert s - t \vert
    \end{align}
    for all $\omega\in \Omega'$.

      In conjunction with the strong law,
      \begin{equation*}
      \frac{1}{N}\sum_{n=1}^N L_{A^n}(\omega) \rightarrow \E\left[ \max_{t\in I} \Vert \partial_t A_t \Vert \right]\mbox{ as }N\rightarrow\infty
      \end{equation*}
      for all $\omega\in \Omega$ except for a null set, and Assumption \eqref{eq:uniformAssumption}, guaranteeing boundedness of the above r.h.s., (\ref{eq:LipschitzBoundMean}) yields stochastic equicontinuity, finishing the proof of (i).
      

    (ii): Recalling (\ref{proof-theorem:PESMuniformconvergence:eq1}) define the compact set
    \begin{equation*}
	K = \overline{\bigcup_{t\in I} \Big\{ B\in \R^{3\times3}\,\big\vert~ \Vert \E[ \gamma(t) ] - B \Vert < \tfrac{\varepsilon_0}{2} \Big\}}\,,
    \end{equation*} 
    which satisfies $K\cap \mathcal{F}=\emptyset$. Again by \citet[Theorem  4.1, p. 6]{Dudek1994} we have that $\mathfrak{pr}|_K$ is analytic and hence Lipschitz continuous.
    
%
    Since we proved in (i) that $\bar\gamma_N$ converges a.s. uniformly to $\E[ \gamma ]$, a.s. for $\omega\in \Omega$ there is $N_K(\omega)$ such that $\bar \gamma_N \in K$ for all $N>N_K(\omega)$. Thus, by Lipschitz continuity of $\mathfrak{pr}|_K$, 
    \begin{equation}\label{eq:wer}
	\big\Vert \hat\mu_N(t) - \gamma_0(t)\big\Vert = \big\Vert \mathfrak{pr}\big(\bar \gamma_N(t)\big) -  \mathfrak{pr}\big(\E[\gamma(t)]\big) \big\Vert \leq C_K \big\Vert \bar \gamma_N(t) - \E[\gamma(t)]\big\Vert
    \end{equation}
    for $C_K>0$ sufficiently large and all $N>N_K(\omega)$, a.s. for $\omega\in \Omega$. Now, the a.s. uniform convergence $\bar\gamma_N\rightarrow \E[ \gamma ]$ implies the a.s. uniform convergence of the left hand side of \eqref{eq:wer} to zero, yielding (ii), in conjunction with (\ref{intrinsic-extrinsic-metrix:eq}).
\qed

\paragraph{Proof of Corollary \ref{corollary:PESMoftenContinuous}.}
 Using the notation of the proof of Theorem \ref{theorem:PESMuniformconvergence}, with $\varepsilon_0$ there and any $0<\varepsilon<\varepsilon_0$, we have with Theorem \ref{theorem:ComputingPEM} that 
 \begin{equation}
  \begin{aligned} \Prb\big\{ t \mapsto \hat\mu_N(t)\in \X \big\}\geq
    \Prb\left\{ \max_{t\in I} \Vert \bar\gamma_N(t)-\mathcal{F} \Vert > \varepsilon \right\}\\ \geq  \Prb\left\{ \max_{t\in I} \Vert \bar\gamma_N(t)- \E[\gamma(t)] \Vert < \varepsilon_0-\varepsilon \right\}
    & \geq 1 - \frac{\E\big[ \max_{t\in I} \Vert \bar\gamma_N(t)- \E[\gamma(t)]\Vert \big] }{\varepsilon_0-\varepsilon}\,.\label{eq:inequPESM}
  \end{aligned}
 \end{equation}
%
%
 In conjunction with (\ref{equation:uniformconvergence}) and $ \Vert \bar\gamma_N(t)- \E[\gamma(t)]  \Vert \leq \sqrt{6}$ for all $t\in I$ (recall that $\|P\| = \sqrt{3/2}$ for all $P\in SO(3)$), Lebesgue's dominated convergence theorem yields \begin{equation*}
   \E\Big[ \max_{t\in I} \Vert \bar\gamma_N(t)- \E[\gamma(t)]\Vert \Big] \rightarrow 0\,,~ ~\text{ for } N\rightarrow \infty\,,
 \end{equation*}
 which together with inequality \eqref{eq:inequPESM} implies the assertion of the corollary.
\qed

\paragraph{Proof of Theorem \ref{theorem:InvariantIntrinsicLoss}.}
   We consider $\delta=\delta_{I,1}$, the proof for the other two losses is similar.
   
   \textit{(i)}:
      Given two curves $\gamma,\eta \in \X$, the length $\delta(\gamma,\eta)$ of the curve $t\mapsto \zeta(t) = \gamma(t)\eta(t)^{-1}$ based on the  intrinsic distance $d_{SO(3)}$ on $SO(3)$ has the representation 
  \begin{equation*}\label{equation:lengthassup}
    \sup\left\{\sum_{k=0}^{K-1} d_{SO(3)}\big(\zeta(t_k),\zeta(t_{k+1})\big) ~ \big\vert ~ K \in \mathbb{N},~ 0=t_0<t_1<...<t_{k-1}<t_K=1 \right\}\,,
  \end{equation*}
  where (\ref{intrinsic-extrinsic-metrix:eq}) implies $d_{SO(3)}\big(\zeta(t_k),\zeta(t_{k+1})\big) = d_{SO(3)}\big(\zeta(t_{k})^{-1},\zeta(t_{k+1})^{-1}\big)$ yielding symmetry of $\delta$.~\\
   \textit{(ii)} follows directly from biinvariance and from the fact that the length of a curve does not depend on its parametrization. 
   
   \textit{(iii)}: $\delta(P\gamma,\eta)=\delta(\gamma,\eta)$ is again a direct consequence of the biinvariance of the metric.
    
    \textit{(iv)}: Assume $\eta = P\gamma$. Then by (iii) we have 
    \begin{equation*}
     \delta(\gamma,\eta) = \delta(\gamma,P\gamma) = \delta(\gamma,\gamma) = {\rm length}(\gamma\gamma^{-1}) =0\,.
    \end{equation*}
    For the other way round assume that $\delta(\gamma,\eta)=0$. In consequence, $\gamma\eta^{-1}$ is a constant, $P^{-1}\in SO(3)$, say, yielding $P \gamma =\eta$.
%
%
%
%
    The proof of \textit{(v)} is straightforward.
\qed

\paragraph{Proof of Theorem \ref{theorem:Lift}.}
    Let $\gamma\in \X$ and $(P,Q)\in SO(3)\times SO(3)$ be arbitrary and choose continuous lifts $\tilde\gamma$ and $\widetilde{P\gamma Q}$ of $\gamma$ and $P\gamma Q$. Since $\pi$ is a homomorphism, we have that
    \begin{align*} \pi(p\tilde \gamma {q})= \pi\left(\widetilde{P\gamma Q}\right) 
    \end{align*}
    with any $p \in \pi^{-1}(P)$ and $q \in \pi^{-1}(Q)$. Since $\pi$ is a double cover, there is a function $\epsilon: [0,1]\rightarrow\{-1,1\}$ such that
    \begin{align*} 
    p\tilde \gamma {q} = \epsilon \widetilde{P\gamma Q}\,.
    \end{align*}
    By continuity of the lifts, $\epsilon$ must be constant and hence by virtue of (iii) of Lemma \ref{lemma:Quat2Rot} there is a unique $R\in SO(4)$ such that
  $  R\tilde \gamma = \epsilon p\tilde \gamma {q}^{} = \widetilde{P\gamma Q}$
    yielding assertion (i),(a). Then $(\pi,\pi) \circ r_{SO(4)} R = (\pi,\pi)\big(\kappa( p,q)\big) =(P,Q)$ with $\kappa \in \{-1,1\}$ yields assertion (i),(b).
    
    Assertion (ii) follows at once from the following
        \begin{align*} 
     \min_{P,Q\in SO(3)} L(P\gamma Q,\eta) \geq \min_{R\in SO(4)}\wL\left(R\tilde\gamma,\tilde \eta\right) \geq  \min_{P,Q\in SO(3) }\wL\left(\widetilde{P\gamma Q},\tilde \eta\right)\geq  \min_{P,Q\in SO(3)} L(P\gamma Q,\eta)\,.
     \end{align*}
     Here the first inequality is due to the definition (taking into account that $-I_{4\times 4}\in SO(4)$),  
    \begin{align*} 
     L(P\gamma Q,\eta) = \min\left\{\wL\left(\widetilde{P\gamma Q},\tilde \eta\right), \wL\left(-\widetilde{P\gamma Q},\tilde \eta\right)\right\}\,,
    \end{align*}
    and (i),(a); the second follows from the consideration that 
    for any $R\in SO(4)$ we have $\pi(R\tilde \gamma) = P \gamma Q$ with $(P,Q)=\Pi(R)$, such that a continuous lift can be chosen such that $\widetilde{P\gamma Q} =R\tilde \gamma$ ;   and the third holds again due to the definition of $\wL$.
    
     
     Assertion (iii) is straightforward.
\qed

\paragraph{Proof of Theorem \ref{theorem:GeneralConsistency}.}
Consider the function
\begin{align*} f:  \mathbb{X} \times \mathbb{Y} \to \RR,~ \big((P,Q),(\gamma_1,\gamma_2)\big)\mapsto L(P\gamma_1Q,\gamma_2)
\end{align*}
mapping from the compact space $\mathbb{X} = \Inull$ and $\mathbb{Y} = \X\times\X$. By hypothesis and Theorem \ref{theorem:computingMPestim}, for fixed $y_0=(\gamma_0,\eta_0)$ it has a unique minimizer $x^*=(P^*,Q^*) \in \mathbb X$. Moreover, by Theorem \ref{theorem:PESMuniformconvergence}, for any sample PEM curves, $y=(\hat\gamma_N,\hat\eta_M)$ converges to $y_0$ outside a set of measure zero and both sample PEM curves belong a.s. eventually to $\X$ by Corollary \ref{corollary:PESMoftenContinuous}. By Lemma \ref{Lem:Max} below (in \citet[Lemma 2]{Chang1986}), every minimizer $x=(\hat P_N,\hat Q_M)$ of $f\big((P,Q),(\hat\gamma_N,\hat\eta_M)\big)$ converges to $x^*=(P^*,Q^*)$, again outside a set of measure zero.\qed

\begin{Lem}\label{Lem:Max}
 Let $f: \mathbb{X} \times \mathbb{Y} \rightarrow \mathbb{R}$ be continuous with $\mathbb{X}$ compact, and assume for fixed $y_0\in \mathbb{Y}$ that $x\mapsto f(x,y_0)$ has a unique minimizer $x^*$. Moreover suppose that $y_n$ converges to $y_0$ and that each $x_n$ is a minimizer of $x\mapsto f(x,y_n)$. Then $x_n \rightarrow x^*$.
\end{Lem}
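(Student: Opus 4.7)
The plan is to use the standard compactness-plus-uniqueness argument for continuity of $\argmin$. Since $\mathbb{X}$ is compact, every subsequence of $(x_n)$ admits a convergent sub-subsequence. It therefore suffices to show that every convergent subsequence of $(x_n)$ has limit $x^*$, because then the full sequence converges to $x^*$ by the standard subsequence principle.

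So I would fix an arbitrary convergent subsequence $x_{n_k}\to x_\infty \in \mathbb{X}$. The defining property of the minimizers gives
\begin{equation*}
    f(x_{n_k}, y_{n_k}) \leq f(x^*, y_{n_k})
\end{equation*}
for every $k$. Since $y_{n_k}\to y_0$ and $x_{n_k}\to x_\infty$, joint continuity of $f$ on $\mathbb{X}\times \mathbb{Y}$ allows passing to the limit on both sides, yielding
\begin{equation*}
    f(x_\infty, y_0) \leq f(x^*, y_0).
\end{equation*}
By the assumed uniqueness of the minimizer of $x\mapsto f(x,y_0)$, this forces $x_\infty = x^*$.

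Since the convergent subsequence was arbitrary, every convergent subsequence of $(x_n)$ has the same limit $x^*$, and compactness of $\mathbb{X}$ then upgrades this to $x_n \to x^*$. There is no real obstacle here; the only place one needs to be a little careful is in recording that $f$ is jointly continuous (so that the two arguments may be passed to the limit simultaneously) and that $\mathbb{X}$ is compact (so that subsequential limits in $\mathbb{X}$ actually exist). Both hypotheses are given, so the argument is essentially a routine textbook Berge-type continuity lemma.
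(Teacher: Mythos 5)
Your proof is correct and follows essentially the same route as the paper's: compactness to extract subsequential limits, the minimizer inequality $f(x_{n},y_{n})\leq f(x^*,y_{n})$ passed to the limit by joint continuity, and uniqueness of the minimizer at $y_0$ to identify every cluster point with $x^*$. The paper's version is just a more compressed statement of exactly this argument, so nothing further is needed.
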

\begin{proof}
 By compactness, $x_n$ has cluster points. Since by hypothesis $f(x_{n},y_{n})\leq f(x,y_{n})$ for all $n\in \mathbb N$ and $x\in\mathbb X$, by continuity and uniqueness, every cluster point is $x^*$.
%
\end{proof}

\bibliography{literature}{}
\bibliographystyle{Chicago}

\end{document}